\documentclass[a4paper,UKenglish,cleveref,autoref]{lipics-v2021}

\hideLIPIcs  
\nolinenumbers

\usepackage[utf8]{inputenc}

\usepackage[all,british]{foreign}

\usepackage{amsmath}
\usepackage{amssymb}
\usepackage{mathtools}
\usepackage{bbold} 
\usepackage{fancyvrb}
\usepackage{tcolorbox}
\usepackage{mathpartir}
\usepackage{xspace}
\usepackage[disable]{todonotes}
\usepackage{cancel}
\usepackage{stackengine} 

\usepackage{varwidth}
\newcommand\codeblock[1]{%
  {\begin{varwidth}{0.9\textwidth}#1\end{varwidth}}
} 
\usepackage{agda}

\usepackage{newunicodechar}

\newcommand{\AD}[1]{\AgdaDatatype{#1}}
\newcommand{\AC}[1]{\AgdaInductiveConstructor{#1}}
\newcommand{\AF}[1]{\AgdaFunction{#1}}
\newcommand{\AB}[1]{\AgdaBound{#1}}
\newcommand{\AK}[1]{\AgdaKeyword{#1}}

\newcommand{\AM}[1]{\AgdaMacro{#1}}

\newunicodechar{∀}{\ensuremath{\forall}}
\newunicodechar{∃}{\ensuremath{\exists}}
\newunicodechar{≡}{\ensuremath{\equiv}}
\newunicodechar{≈}{\ensuremath{\approx}}
\newunicodechar{≟}{\ensuremath{\stackrel{?}{=}}}
\newunicodechar{′}{{'}}
\newunicodechar{∷}{\ensuremath{\dblcolon}}
\newunicodechar{⊤}{\ensuremath{\top}}

\newunicodechar{₀}{\ensuremath{_0}}
\newunicodechar{₁}{\ensuremath{_1}}
\newunicodechar{₂}{\ensuremath{_2}}
\newunicodechar{₃}{\ensuremath{_3}}
\newunicodechar{₄}{\ensuremath{_4}}
\newunicodechar{₅}{\ensuremath{_5}}
\newunicodechar{₆}{\ensuremath{_6}}
\newunicodechar{₇}{\ensuremath{_7}}
\newunicodechar{₈}{\ensuremath{_8}}
\newunicodechar{₉}{\ensuremath{_9}}
\newunicodechar{ι}{\ensuremath{\iota}}
\newunicodechar{ₗ}{\ensuremath{_l}}
\newunicodechar{ₐ}{\ensuremath{_a}}
\newunicodechar{ₖ}{\ensuremath{_k}}
\newunicodechar{ᵢ}{\ensuremath{_i}}
\newunicodechar{ⱼ}{\ensuremath{_j}}
\newunicodechar{ᵥ}{\ensuremath{_v}}
\newunicodechar{ₕ}{\ensuremath{_h}}
\newunicodechar{ₚ}{\ensuremath{_p}}
\newunicodechar{ₛ}{\ensuremath{_s}}
\newunicodechar{ₒ}{\ensuremath{_o}}
\newunicodechar{ₙ}{\ensuremath{_n}}
\newunicodechar{ᵣ}{\ensuremath{_r}}

\newunicodechar{ⁿ}{\ensuremath{^n}}
\newunicodechar{ᶠ}{\ensuremath{^f}}
\newunicodechar{ᶜ}{\ensuremath{^c}}
\newunicodechar{ʳ}{\ensuremath{^r}}
\newunicodechar{ˡ}{\ensuremath{^l}}
\newunicodechar{ᵛ}{\ensuremath{^v}}
\newunicodechar{ᵉ}{\ensuremath{^e}}
\newunicodechar{ℕ}{\ensuremath{\mathbb{N}}}
\newunicodechar{ℝ}{\ensuremath{\mathbb{R}}}
\newunicodechar{ℂ}{\ensuremath{\mathbb{C}}}
\newunicodechar{𝔽}{\ensuremath{\mathbb{F}}}
\newunicodechar{𝟘}{\ensuremath{\mathbb{0}}}
\newunicodechar{𝟙}{\ensuremath{\mathbb{1}}}

\newunicodechar{π}{\ensuremath{\pi}}
\newunicodechar{ω}{\ensuremath{\omega}}

\title{When Agda met Vampire}

\author{Artjoms \v{S}inkarovs}{Electronics and Computer Science, University of Southampton, UK}{a.sinkarovs@soton.ac.uk}{https://orcid.org/0000-0003-3292-2985}{[funding]} 
\author{Michael Rawson}{Electronics and Computer Science, University of Southampton, UK}{michael@rawsons.uk}{https://orcid.org/0000-0001-7834-1567}{[funding]}

\authorrunning{A. \v{S}inkarovs and M. Rawson} 

\Copyright{Artjoms \v{S}inkarovs and Michael Rawson} 

\ccsdesc[100]{\textcolor{red}{Replace ccsdesc macro with valid one}} 

\keywords{dependent type theory, constructive logic, automated theorem proving, first-order logic, Agda, Vampire} 

\category{} 

\relatedversion{} 

\acknowledgements{We would like to thank Mario Carneiro for introducing us to the proof-transform technique used here.}

\EventEditors{John Q. Open and Joan R. Access}
\EventNoEds{2}
\EventLongTitle{42nd Conference on Very Important Topics (CVIT 2016)}
\EventShortTitle{CVIT 2016}
\EventAcronym{CVIT}
\EventYear{2016}
\EventDate{December 24--27, 2016}
\EventLocation{Little Whinging, United Kingdom}
\EventLogo{}
\SeriesVolume{42}
\ArticleNo{23}

\begin{document}
\newcommand{\vampire}{\textsc{Vampire}\xspace}
\newcommand{\mgu}{\mathrm{mgu}}
\newcommand{\pa}{\vdash_{\mathrm{PA}}}
\newcommand{\ha}{\vdash_{\mathrm{HA}}}

\maketitle

\begin{abstract}
  Dependently-typed proof assistants furnish expressive foundations for
  mechanised mathematics and verified software.
  However, automation for these systems has been either modest in scope or complex in implementation.
  We aim to improve the situation by integrating proof assistants with automated theorem
  provers (ATPs) in a simple way, while preserving the correctness guarantees of the former.
  A central difficulty arises from the fact that most ATPs operate in classical
  first‑order logic, whereas these proof assistants are grounded in constructive
  dependent type theory. We identify an expressive fragment of both
  languages---essentially equational Horn---that admits sound, straightforward translations in both directions.
  The approach produces a prototype system for Agda forwarding
  proof obligations to the ATP \vampire, then transforming the resulting classical
  proof into a constructive proof term that Agda can type‑check. The
  prototype automatically derives proofs concerning the properties of a complex field equipped with roots of unity,
  which took professional Agda developers two full days to complete.
  The required engineering effort is modest, and we anticipate that the
  methodology will extend readily to other ATPs and proof assistants.
\end{abstract}

\section{\label{sec:intro}Introduction}

Dependently typed proof assistants such as Lean, Agda and Rocq are powerful
environments for correct‑by‑construction programming and formal theorem
proving. Thanks to the Curry–Howard correspondence, program invariants or
theorems are expressed as types, and correctness follows once the corresponding
specification type‑checks.
This approach does not depend on untrusted components such as
SAT/SMT solvers, but it also places a substantial burden on the developer:
non‑trivial proof obligations frequently arise and must be discharged. In
practice many of these obligations are routine rather than mathematically deep
--- for example, proving that two constructions in a particular encoding are
isomorphic or share a simple property --- yet they are nevertheless
labour‑intensive. Consequently, effective proof automation is essential for the
practical adoption of dependently-typed systems.

Proof automation in interactive theorem provers (ITPs) is a longstanding and
difficult research problem. Most proof assistants provide tactic languages to
automate some work, but designing efficient proof‑search strategies for a given
theory remains hard. Na\"ive brute‑force search succeeds only in the simplest
cases; more capable methods lead naturally to Automatic Theorem Provers (ATPs).
At that juncture an ITP faces a choice: rely on external ATPs
--- which raises compatibility and trust issues ---
or implement proof search procedures inside the ITP.
Typical internal procedures are significantly weaker than the state of the art in ATP.
Practical ``hammer'' systems such as Isabelle's Sledgehammer or CoqHammer
use an external ATP to find an initial proof, then use that proof as a ``hint''
to reconstruct another proof using internal certified procedures.
The hint could be merely the premises required to prove a goal,
or using intermediate steps from the untrusted proof.
Sledgehammer works particularly well because the internal prover \emph{metis}
is relatively strong, and because Isabelle/HOL's logic is similar to that of ATPs.
Dependently-typed systems do not enjoy these advantages.

Here we pursue a deliberately lightweight variant of this approach,
playing to the strengths of existing systems.
We identify a useful fragment of a dependently-typed language that
can be translated straightforwardly to an ATP \emph{and vice-versa},
and produce a concise, verifiable round‑trip translation:
(i) translate a goal and the necessary lemmas into the ATP;
(ii) invoke the ATP to search for a proof; (iii) reconstruct the ATP proof as an
ITP proof term; (iv) typecheck the reconstructed term. By confining the
translation to a well‑chosen fragment, we avoid intrusive changes to either
system while still enabling substantial automation.
Proof re-construction also becomes substantially easier: if a proof is found by the external ATP,
it is almost always reconstructed.

Concretely, we integrate Agda --- based on Martin‑Löf type theory --- with
\vampire, a leading ATP and multi‑time winner of CADE ATP System Competition (CASC). We use
Agda's reflection facilities to extract the current goal and environment and to
emit Vampire problems. This reflection‑based technique does not require any
changes to Agda. \vampire's proof output, when successful, is translated back into
Agda using a compact reconstruction engine implemented in Prolog: Prolog's
native search facilities map naturally to the reconstruction task. The
resulting proof term is then certified by Agda's typechecker.

A central claim of this work is that such lightweight integration --- applying
each tool for what it already does best --- demands minimal implementation
effort and maintenance compared with deep, invasive integration. Our prototype
implementation is already practical: we used it to discharge
properties of a representation of the complex field with roots of unity that
arose in a real research project. What previously required an Agda specialist
around two full days is produced automatically.
The remainder of the paper describes the translated fragment, translation
and reconstruction, implementation details, and a case study.

\section{Background}
We assume passing familiarity with classical first-order logic and with dependently-typed interactive theorem proving.
There is a well-known correspondence~\cite{propositions-as-types} between logic and type systems, often under the slogan \emph{propositions as types}.
The gist is that logical propositions $P$ correspond to types $\tau$, and that proofs of $P$ correspond to terms inhabiting $\tau$.
This is the conceit of type-theoretic proof assistants like Agda: with a sufficiently expressive type system, one can prove non-trivial propositions by producing a term which inhabits a type.
Such systems exploit the propositions-as-types correspondence in one direction.
We will need to traverse both directions for the integration we seek.

One could be forgiven for thinking that the correspondence is tighter than it is: that there is no real difference between logic and type systems.
Unfortunately for us, there \emph{are} real technical and cultural differences, which are usually poorly understood by specialists in either world.
The best-known difference is a cultural preference for constructivism in type-theoretic systems, whereas logical proof asssistants and ATPs are more often classical.

Less well-understood in our experience is that logical systems --- including \vampire~--- draw a distinction between \emph{terms} and \emph{formulae}.
Terms are inhabitants of the \emph{domain of discourse}, which are implicitly assumed to be non-empty: this renders questions about whether a type is inhabited uninteresting, as all types are inhabited.
Deduction instead happens by deriving new formulae from old using the logic of the system.
Perhaps confusingly, terms can be Boolean~\cite{fool}.
On the other hand, type-theoretic systems often make no such distinction: there are simply terms, stratified by the has-type relation $t : \tau$.
Our task here is to reconcile the type system of Agda and the logic of \vampire: we will now describe both, after introducing some notation.

\subsection{Horn Logic and Notation}
We will make use of the Horn fragment of first-order logic, which we now briefly recall.
An \emph{atom} $P(t_1, \ldots, t_n)$ is a predicate $P$ applied to terms $t_i$, and a \emph{literal} $L$ is either an atom $A$ or its negation $\lnot A$.
Equations $s = t$ are merely a special atom with two arguments.
A first-order \emph{clause} is a universally-quantified disjunction of literals, from which we will usually elide the quantifiers.
A Horn clause is a clause with no more than one positive ``head'' literal.
We distinguish between \emph{definite} clauses with \emph{exactly} one positive literal, and \emph{goal} clauses with \emph{no} positive literals.
A definite clause can be written as an implication $P_1 \wedge \ldots \wedge P_n \to Q$, where each $P_i$ and $Q$ are atoms.
Naturally, a goal clause can then be written $P_1 \wedge \ldots \wedge P_n \to \bot$, although it is really the clause $\lnot P_1 \vee \ldots \vee \lnot P_n$.
The empty clause $\bot$ is also a goal clause, with $\bot$ for the head and an empty body.

When writing inference rules, we will use the letters $C$ and $D$ to indicate \emph{part} of a (Horn) clause, which may be empty.
Two clauses which participate in an inference are assumed to have their variables renamed apart from each other.
We will also assume that literals in a (Horn) clause may be re-ordered in order to apply an inference rule, as is common implicit practice.
Equally, equations $s = t$ are treated as implicitly symmetric.
We write $s[t]$ to indicate that $t$ is a subterm of $s$: this includes the case where $s$ is $t$.

We say that a first-order term (literal, clause) is \emph{ground} if it contains no variables.
A substitution $\sigma$ is a map from variables to terms: when applied to a term (literal, clause) it acts to replace every variable $x$ in that term with $\sigma(x)$.
Clearly, then, $\sigma(t) = t$ when $t$ is ground.
A substitution is a unifier for $s$ and $t$ when $\sigma(s) = \sigma(t)$, and it is the \emph{most general} unifier when it cannot be expressed as the composition of a unifier and a non-empty substitution.
Most general unifiers can be computed with a \emph{unification algorithm}.

\subsection{\label{sec:agda-intro}The Agda Proof Assistant}
\begin{code}[hide]%
\>[0]\AgdaComment{--\ module\ papers.ITP26.agda-intro\ where}\<%
\\
\>[0]\AgdaKeyword{module}\AgdaSpace{}%
\AgdaModule{\AgdaUnderscore{}}\AgdaSpace{}%
\AgdaKeyword{where}\<%
\\
\>[0]\AgdaKeyword{module}\AgdaSpace{}%
\AgdaModule{Intro}\AgdaSpace{}%
\AgdaKeyword{where}\<%
\\
\>[0][@{}l@{\AgdaIndent{0}}]%
\>[2]\AgdaKeyword{open}\AgdaSpace{}%
\AgdaKeyword{import}\AgdaSpace{}%
\AgdaModule{Data.Nat}\AgdaSpace{}%
\AgdaKeyword{hiding}\AgdaSpace{}%
\AgdaSymbol{(}\AgdaOperator{\AgdaFunction{\AgdaUnderscore{}<\AgdaUnderscore{}}}\AgdaSymbol{)}\<%
\\
\>[2]\AgdaKeyword{open}\AgdaSpace{}%
\AgdaKeyword{import}\AgdaSpace{}%
\AgdaModule{Data.Unit}\<%
\end{code}

Agda is both a dependently typed functional programming language and a
proof assistant, offering an interactive environment for writing and
checking formal proofs.  It provides inductive families, analogous to
Haskell's GADTs, but with indices that may be values as well as types.
The system is founded on intuitionistic type theory, a framework for
constructive mathematics developed by Per Martin‑L\"of.  It shares many
characteristics with other dependently typed proof assistants such as
Rocq, Idris and Lean.  The following is a concise overview of the
constructs necessary for the remainder of the paper; accessible
introductions are available online.\footnote{See \url{https://agda.readthedocs.io/en/latest/getting-started/tutorial-list.html}.}

We begin with a simple algebraic data type: binary trees, \AD{Tree}, with
two constructors.  \AC{leaf} takes no arguments, while \AC{node} takes two
arguments, both of type \AF{Tree}.  Functions on such data types are
typically defined by recursion and pattern matching; for example:
\begin{mathpar}
\codeblock{\begin{code}%
\>[2]\AgdaKeyword{data}\AgdaSpace{}%
\AgdaDatatype{Tree}\AgdaSpace{}%
\AgdaSymbol{:}\AgdaSpace{}%
\AgdaPrimitive{Set}\AgdaSpace{}%
\AgdaKeyword{where}\<%
\\
\>[2][@{}l@{\AgdaIndent{0}}]%
\>[4]\AgdaInductiveConstructor{leaf}%
\>[10]\AgdaSymbol{:}\AgdaSpace{}%
\AgdaDatatype{Tree}\<%
\\
\>[4]\AgdaInductiveConstructor{node}%
\>[10]\AgdaSymbol{:}\AgdaSpace{}%
\AgdaDatatype{Tree}\AgdaSpace{}%
\AgdaSymbol{→}\AgdaSpace{}%
\AgdaDatatype{Tree}\AgdaSpace{}%
\AgdaSymbol{→}\AgdaSpace{}%
\AgdaDatatype{Tree}\<%
\end{code}
}
\and
\codeblock{\begin{code}%
\>[2]\AgdaFunction{size}\AgdaSpace{}%
\AgdaSymbol{:}\AgdaSpace{}%
\AgdaDatatype{Tree}\AgdaSpace{}%
\AgdaSymbol{→}\AgdaSpace{}%
\AgdaDatatype{ℕ}\<%
\\
\>[2]\AgdaFunction{size}\AgdaSpace{}%
\AgdaInductiveConstructor{leaf}%
\>[19]\AgdaSymbol{=}\AgdaSpace{}%
\AgdaNumber{1}\<%
\\
\>[2]\AgdaFunction{size}\AgdaSpace{}%
\AgdaSymbol{(}\AgdaInductiveConstructor{node}\AgdaSpace{}%
\AgdaBound{l}\AgdaSpace{}%
\AgdaBound{r}\AgdaSymbol{)}%
\>[19]\AgdaSymbol{=}\AgdaSpace{}%
\AgdaNumber{1}\AgdaSpace{}%
\AgdaOperator{\AgdaPrimitive{+}}\AgdaSpace{}%
\AgdaFunction{size}\AgdaSpace{}%
\AgdaBound{l}\AgdaSpace{}%
\AgdaOperator{\AgdaPrimitive{+}}\AgdaSpace{}%
\AgdaFunction{size}\AgdaSpace{}%
\AgdaBound{r}\<%
\end{code}
}
\end{mathpar}
Under the Curry–Howard correspondence one may view \AF{Tree} as a
primitive proposition whose values are its proofs; in our translation,
however, such simple data types will correspond to terms rather than to
logical propositions.

The key feature of Agda is dependent types.  The following \AF{Path} type is
indexed by a \AF{Tree} and represents all valid paths in the given tree:
\begin{mathpar}
\codeblock{\begin{code}%
\>[2]\AgdaKeyword{data}\AgdaSpace{}%
\AgdaDatatype{Path}\AgdaSpace{}%
\AgdaSymbol{:}\AgdaSpace{}%
\AgdaDatatype{Tree}\AgdaSpace{}%
\AgdaSymbol{→}\AgdaSpace{}%
\AgdaPrimitive{Set}\AgdaSpace{}%
\AgdaKeyword{where}\<%
\\
\>[2][@{}l@{\AgdaIndent{0}}]%
\>[4]\AgdaInductiveConstructor{done}%
\>[11]\AgdaSymbol{:}\AgdaSpace{}%
\AgdaSymbol{∀}\AgdaSpace{}%
\AgdaSymbol{\{}\AgdaBound{t}\AgdaSymbol{\}}%
\>[22]\AgdaSymbol{→}\AgdaSpace{}%
\AgdaDatatype{Path}\AgdaSpace{}%
\AgdaBound{t}\<%
\\
\>[4]\AgdaInductiveConstructor{left}%
\>[11]\AgdaSymbol{:}\AgdaSpace{}%
\AgdaSymbol{∀}\AgdaSpace{}%
\AgdaSymbol{\{}\AgdaBound{l}\AgdaSpace{}%
\AgdaBound{r}\AgdaSymbol{\}}%
\>[22]\AgdaSymbol{→}\AgdaSpace{}%
\AgdaDatatype{Path}\AgdaSpace{}%
\AgdaBound{l}\AgdaSpace{}%
\AgdaSymbol{→}\AgdaSpace{}%
\AgdaDatatype{Path}\AgdaSpace{}%
\AgdaSymbol{(}\AgdaInductiveConstructor{node}\AgdaSpace{}%
\AgdaBound{l}\AgdaSpace{}%
\AgdaBound{r}\AgdaSymbol{)}\<%
\\
\>[4]\AgdaInductiveConstructor{right}%
\>[11]\AgdaSymbol{:}\AgdaSpace{}%
\AgdaSymbol{∀}\AgdaSpace{}%
\AgdaSymbol{\{}\AgdaBound{l}\AgdaSpace{}%
\AgdaBound{r}\AgdaSymbol{\}}%
\>[22]\AgdaSymbol{→}\AgdaSpace{}%
\AgdaDatatype{Path}\AgdaSpace{}%
\AgdaBound{r}\AgdaSpace{}%
\AgdaSymbol{→}\AgdaSpace{}%
\AgdaDatatype{Path}\AgdaSpace{}%
\AgdaSymbol{(}\AgdaInductiveConstructor{node}\AgdaSpace{}%
\AgdaBound{l}\AgdaSpace{}%
\AgdaBound{r}\AgdaSymbol{)}\<%
\end{code}
}
\and
\codeblock{\begin{code}%
\>[2]\AgdaFunction{length}\AgdaSpace{}%
\AgdaSymbol{:}\AgdaSpace{}%
\AgdaSymbol{∀}\AgdaSpace{}%
\AgdaSymbol{\{}\AgdaBound{t}\AgdaSymbol{\}}\AgdaSpace{}%
\AgdaSymbol{→}\AgdaSpace{}%
\AgdaDatatype{Path}\AgdaSpace{}%
\AgdaBound{t}\AgdaSpace{}%
\AgdaSymbol{→}\AgdaSpace{}%
\AgdaDatatype{ℕ}\<%
\\
\>[2]\AgdaFunction{length}\AgdaSpace{}%
\AgdaInductiveConstructor{done}%
\>[20]\AgdaSymbol{=}\AgdaSpace{}%
\AgdaNumber{0}\<%
\\
\>[2]\AgdaFunction{length}\AgdaSpace{}%
\AgdaSymbol{(}\AgdaInductiveConstructor{left}\AgdaSpace{}%
\AgdaBound{p}\AgdaSymbol{)}%
\>[20]\AgdaSymbol{=}\AgdaSpace{}%
\AgdaNumber{1}\AgdaSpace{}%
\AgdaOperator{\AgdaPrimitive{+}}\AgdaSpace{}%
\AgdaFunction{length}\AgdaSpace{}%
\AgdaBound{p}\<%
\\
\>[2]\AgdaFunction{length}\AgdaSpace{}%
\AgdaSymbol{(}\AgdaInductiveConstructor{right}\AgdaSpace{}%
\AgdaBound{p}\AgdaSymbol{)}%
\>[20]\AgdaSymbol{=}\AgdaSpace{}%
\AgdaNumber{1}\AgdaSpace{}%
\AgdaOperator{\AgdaPrimitive{+}}\AgdaSpace{}%
\AgdaFunction{length}\AgdaSpace{}%
\AgdaBound{p}\<%
\end{code}
}
\end{mathpar}
\AD{Path} has three constructors that constrain the shape of its index:
\AC{done}, which signals termination at the tree \AB{t}; and \AC{left} and
\AC{right}, which descend into the left or right subtree respectively,
when such subtrees exist.  The quantifier ∀ allows omission of explicit
types for the quantified variables; arguments in braces are implicit and
are not supplied during application or pattern matching.  The dependent
function \AF{length} computes over \AF{Path} values in the same manner as
over ordinary algebraic data types.

Under Curry–Howard, dependent types such as \AF{Path} correspond to
predicates: the predicate holds precisely when the corresponding type is
inhabited.  A noteworthy distinction is that some Agda predicates are
proof‑relevant (they may have many distinct inhabitants), whereas logical
propositions are typically proof‑irrelevant.  Consequently, types like
\AF{Path} admit multiple inhabitants that a function such as \AF{length}
may distinguish; this behaviour has no exact analogue in traditional
predicate logic.

When a predicate is propositional (that is, it has at most one
inhabitant), the logical proof and the Agda programme are interchangeable.
We confine our attention in this work to such propositions.  As an
example, consider \AD{IsEven} and propositional equality:
\begin{mathpar}
\codeblock{\begin{code}%
\>[2]\AgdaKeyword{data}\AgdaSpace{}%
\AgdaDatatype{IsEven}\AgdaSpace{}%
\AgdaSymbol{:}\AgdaSpace{}%
\AgdaDatatype{ℕ}\AgdaSpace{}%
\AgdaSymbol{→}\AgdaSpace{}%
\AgdaPrimitive{Set}\AgdaSpace{}%
\AgdaKeyword{where}\<%
\\
\>[2][@{}l@{\AgdaIndent{0}}]%
\>[4]\AgdaInductiveConstructor{zero}%
\>[10]\AgdaSymbol{:}\AgdaSpace{}%
\AgdaDatatype{IsEven}\AgdaSpace{}%
\AgdaNumber{0}\<%
\\
\>[4]\AgdaInductiveConstructor{suc}%
\>[10]\AgdaSymbol{:}\AgdaSpace{}%
\AgdaSymbol{∀}\AgdaSpace{}%
\AgdaSymbol{\{}\AgdaBound{n}\AgdaSymbol{\}}\AgdaSpace{}%
\AgdaSymbol{→}\AgdaSpace{}%
\AgdaDatatype{IsEven}\AgdaSpace{}%
\AgdaBound{n}\AgdaSpace{}%
\AgdaSymbol{→}\AgdaSpace{}%
\AgdaDatatype{IsEven}\AgdaSpace{}%
\AgdaSymbol{(}\AgdaNumber{2}\AgdaSpace{}%
\AgdaOperator{\AgdaPrimitive{+}}\AgdaSpace{}%
\AgdaBound{n}\AgdaSymbol{)}\<%
\end{code}
}
\and
\codeblock{\begin{code}%
\>[2]\AgdaKeyword{data}\AgdaSpace{}%
\AgdaOperator{\AgdaDatatype{\AgdaUnderscore{}≡\AgdaUnderscore{}}}\AgdaSpace{}%
\AgdaSymbol{\{}\AgdaBound{A}\AgdaSpace{}%
\AgdaSymbol{:}\AgdaSpace{}%
\AgdaPrimitive{Set}\AgdaSymbol{\}}\AgdaSpace{}%
\AgdaSymbol{(}\AgdaBound{x}\AgdaSpace{}%
\AgdaSymbol{:}\AgdaSpace{}%
\AgdaBound{A}\AgdaSymbol{)}\AgdaSpace{}%
\AgdaSymbol{:}\AgdaSpace{}%
\AgdaBound{A}\AgdaSpace{}%
\AgdaSymbol{→}\AgdaSpace{}%
\AgdaPrimitive{Set}\<%
\\
\>[2][@{}l@{\AgdaIndent{0}}]%
\>[3]\AgdaKeyword{where}\<%
\\
\>[3][@{}l@{\AgdaIndent{0}}]%
\>[4]\AgdaInductiveConstructor{refl}\AgdaSpace{}%
\AgdaSymbol{:}\AgdaSpace{}%
\AgdaBound{x}\AgdaSpace{}%
\AgdaOperator{\AgdaDatatype{≡}}\AgdaSpace{}%
\AgdaBound{x}\<%
\end{code}
}
\end{mathpar}
The type \AD{\_≡\_} asserts that two terms are equal when they reduce to
the same normal form; we translate this type as logical equality.

Finally, Agda provides a reflection mechanism, which yields a syntactic
representation (an abstract syntax tree) of internal terms --- a facility
analogous to quotation in Lisp.  Reflection was introduced to support
proof automation.  Since Agda lacks a separate tactic language, as found
in Rocq or Lean, reflection permits tactics to be implemented directly in
Agda.  One can quote a term to obtain its AST, manipulate the AST as
required, and then unquote the AST to produce an Agda term; the ensuing
unquotation invokes type checking, thereby ensuring that only well‑typed
terms are constructed.  The sequel illustrates how reflection is used to
construct a function \AF{foo} from an AST so that it coincides exactly
with a specified reference definition \AF{foo′}.
\vskip 5pt 
\noindent
\begin{code}[hide]%
\>[0]\AgdaKeyword{module}\AgdaSpace{}%
\AgdaModule{R}\AgdaSpace{}%
\AgdaKeyword{where}\<%
\\
\>[0][@{}l@{\AgdaIndent{0}}]%
\>[2]\AgdaKeyword{open}\AgdaSpace{}%
\AgdaKeyword{import}\AgdaSpace{}%
\AgdaModule{Data.Nat}\<%
\\
\>[2]\AgdaKeyword{open}\AgdaSpace{}%
\AgdaKeyword{import}\AgdaSpace{}%
\AgdaModule{Data.List}\<%
\\
\>[2]\AgdaKeyword{open}\AgdaSpace{}%
\AgdaKeyword{import}\AgdaSpace{}%
\AgdaModule{Data.Product}\<%
\\
\>[2]\AgdaKeyword{open}\AgdaSpace{}%
\AgdaKeyword{import}\AgdaSpace{}%
\AgdaModule{Data.Unit}\<%
\\
\>[2]\AgdaKeyword{open}\AgdaSpace{}%
\AgdaKeyword{import}\AgdaSpace{}%
\AgdaModule{Reflection}\<%
\\
\>[2]\AgdaKeyword{open}\AgdaSpace{}%
\AgdaModule{Clause}\<%
\\
\>[2]\AgdaKeyword{open}\AgdaSpace{}%
\AgdaModule{Pattern}\<%
\end{code}
\codeblock{\begin{code}%
\>[2]\AgdaKeyword{pattern}\AgdaSpace{}%
\AgdaInductiveConstructor{`ℕ}%
\>[20]\AgdaSymbol{=}\AgdaSpace{}%
\AgdaInductiveConstructor{def}\AgdaSpace{}%
\AgdaSymbol{(}\AgdaKeyword{quote}\AgdaSpace{}%
\AgdaDatatype{ℕ}\AgdaSymbol{)}\AgdaSpace{}%
\AgdaInductiveConstructor{[]}\<%
\\
\>[2]\AgdaKeyword{pattern}\AgdaSpace{}%
\AgdaInductiveConstructor{`zero}%
\>[20]\AgdaSymbol{=}\AgdaSpace{}%
\AgdaInductiveConstructor{con}\AgdaSpace{}%
\AgdaSymbol{(}\AgdaKeyword{quote}\AgdaSpace{}%
\AgdaInductiveConstructor{zero}\AgdaSymbol{)}\AgdaSpace{}%
\AgdaInductiveConstructor{[]}\<%
\end{code}
}
\codeblock{\begin{code}%
\>[2]\AgdaKeyword{pattern}\AgdaSpace{}%
\AgdaInductiveConstructor{`suc}\AgdaSpace{}%
\AgdaBound{x}%
\>[20]\AgdaSymbol{=}\AgdaSpace{}%
\AgdaInductiveConstructor{con}\AgdaSpace{}%
\AgdaSymbol{(}\AgdaKeyword{quote}\AgdaSpace{}%
\AgdaInductiveConstructor{suc}\AgdaSymbol{)}\AgdaSpace{}%
\AgdaSymbol{(}\AgdaInductiveConstructor{vArg}\AgdaSpace{}%
\AgdaBound{x}\AgdaSpace{}%
\AgdaOperator{\AgdaInductiveConstructor{∷}}\AgdaSpace{}%
\AgdaInductiveConstructor{[]}\AgdaSymbol{)}\<%
\\
\>[2]\AgdaKeyword{pattern}\AgdaSpace{}%
\AgdaOperator{\AgdaInductiveConstructor{\AgdaUnderscore{}`+\AgdaUnderscore{}}}\AgdaSpace{}%
\AgdaBound{x}\AgdaSpace{}%
\AgdaBound{y}%
\>[20]\AgdaSymbol{=}\AgdaSpace{}%
\AgdaInductiveConstructor{def}\AgdaSpace{}%
\AgdaSymbol{(}\AgdaKeyword{quote}\AgdaSpace{}%
\AgdaOperator{\AgdaPrimitive{\AgdaUnderscore{}+\AgdaUnderscore{}}}\AgdaSymbol{)}\AgdaSpace{}%
\AgdaSymbol{(}\AgdaInductiveConstructor{vArg}\AgdaSpace{}%
\AgdaBound{x}\AgdaSpace{}%
\AgdaOperator{\AgdaInductiveConstructor{∷}}\AgdaSpace{}%
\AgdaInductiveConstructor{vArg}\AgdaSpace{}%
\AgdaBound{y}\AgdaSpace{}%
\AgdaOperator{\AgdaInductiveConstructor{∷}}\AgdaSpace{}%
\AgdaInductiveConstructor{[]}\AgdaSymbol{)}\<%
\end{code}
}

We first define four \AK{patterns}, which resemble function definitions
but may be used within pattern‑matching expressions, and then use them to
construct the AST shown below:
\vskip 5pt 
\codeblock{\begin{code}%
\>[2]\AgdaFunction{foo}\AgdaSpace{}%
\AgdaSymbol{:}\AgdaSpace{}%
\AgdaDatatype{ℕ}\AgdaSpace{}%
\AgdaSymbol{→}\AgdaSpace{}%
\AgdaDatatype{ℕ}\<%
\\
\>[2]\AgdaKeyword{unquoteDef}\AgdaSpace{}%
\AgdaFunction{foo}\AgdaSpace{}%
\AgdaSymbol{=}\AgdaSpace{}%
\AgdaPostulate{defineFun}\AgdaSpace{}%
\AgdaFunction{foo}\<%
\\
\>[2][@{}l@{\AgdaIndent{0}}]%
\>[5]\AgdaSymbol{(}\AgdaSpace{}%
\AgdaInductiveConstructor{clause}\AgdaSpace{}%
\AgdaInductiveConstructor{[]}\AgdaSpace{}%
\AgdaSymbol{(}\AgdaInductiveConstructor{vArg}\AgdaSpace{}%
\AgdaInductiveConstructor{`zero}\AgdaSpace{}%
\AgdaOperator{\AgdaInductiveConstructor{∷}}\AgdaSpace{}%
\AgdaInductiveConstructor{[]}\AgdaSymbol{)}\AgdaSpace{}%
\AgdaInductiveConstructor{`zero}\<%
\\
\>[5]\AgdaOperator{\AgdaInductiveConstructor{∷}}\AgdaSpace{}%
\AgdaInductiveConstructor{clause}\AgdaSpace{}%
\AgdaSymbol{((}\AgdaString{"x"}\AgdaSpace{}%
\AgdaOperator{\AgdaInductiveConstructor{,}}\AgdaSpace{}%
\AgdaInductiveConstructor{vArg}\AgdaSpace{}%
\AgdaInductiveConstructor{`ℕ}\AgdaSymbol{)}\AgdaSpace{}%
\AgdaOperator{\AgdaInductiveConstructor{∷}}\AgdaSpace{}%
\AgdaInductiveConstructor{[]}\AgdaSymbol{)}%
\>[189I]\AgdaSymbol{(}\AgdaInductiveConstructor{vArg}\AgdaSpace{}%
\AgdaSymbol{(}\AgdaInductiveConstructor{`suc}\AgdaSpace{}%
\AgdaSymbol{(}\AgdaInductiveConstructor{var}\AgdaSpace{}%
\AgdaNumber{0}\AgdaSymbol{))}\AgdaSpace{}%
\AgdaOperator{\AgdaInductiveConstructor{∷}}\AgdaSpace{}%
\AgdaInductiveConstructor{[]}\AgdaSymbol{)}\<%
\\
\>[.][@{}l@{}]\<[189I]%
\>[37]\AgdaSymbol{(}\AgdaInductiveConstructor{var}\AgdaSpace{}%
\AgdaNumber{0}\AgdaSpace{}%
\AgdaInductiveConstructor{[]}\AgdaSpace{}%
\AgdaOperator{\AgdaInductiveConstructor{`+}}\AgdaSpace{}%
\AgdaInductiveConstructor{var}\AgdaSpace{}%
\AgdaNumber{0}\AgdaSpace{}%
\AgdaInductiveConstructor{[]}\AgdaSymbol{)}\AgdaSpace{}%
\AgdaOperator{\AgdaInductiveConstructor{∷}}\AgdaSpace{}%
\AgdaInductiveConstructor{[]}\AgdaSpace{}%
\AgdaSymbol{)}\<%
\end{code}
}
\codeblock{\begin{code}%
\>[2]\AgdaFunction{foo′}\AgdaSpace{}%
\AgdaSymbol{:}\AgdaSpace{}%
\AgdaDatatype{ℕ}\AgdaSpace{}%
\AgdaSymbol{→}\AgdaSpace{}%
\AgdaDatatype{ℕ}\<%
\\
\>[2]\AgdaFunction{foo′}\AgdaSpace{}%
\AgdaInductiveConstructor{zero}%
\>[15]\AgdaSymbol{=}\AgdaSpace{}%
\AgdaInductiveConstructor{zero}\<%
\\
\>[2]\AgdaFunction{foo′}\AgdaSpace{}%
\AgdaSymbol{(}\AgdaInductiveConstructor{suc}\AgdaSpace{}%
\AgdaBound{x}\AgdaSymbol{)}\AgdaSpace{}%
\AgdaSymbol{=}\AgdaSpace{}%
\AgdaBound{x}\AgdaSpace{}%
\AgdaOperator{\AgdaPrimitive{+}}\AgdaSpace{}%
\AgdaBound{x}\<%
\end{code}
}

\subsection{The \vampire Automated Theorem Prover}
\vampire~\cite{vampire,vampire-diary} is a successful automated theorem prover for classical first-order logic with equality.
It also has extensions for reasoning about fixed pre-defined \emph{theories} (such as arithmetic~\cite{alasca}) and for higher-order logic~\cite{vampire-ho}, but we use only the first-order core and support for inductively-defined data~\cite{coming-to-terms} here.
We chose \vampire because it is empirically very successful, and because one author was familiar with its behaviour, but we expect that this work could be profitably transplanted on to another ATP system that uses a similar calculus, including E~\cite{E}, Zipperposition~\cite{zipperposition}, or iProver~\cite{iProver}.
Unit equational systems like Twee~\cite{twee} or Waldmeister~\cite{waldmeister} could also be employed by translating Horn clauses~\cite{horn2ueq}.
Different systems will have complementary strengths.
\vampire also has potential for further extensions (\cref{sec:discussion}), improving its ability to dispatch Agda goals.
Given the audience, we would like to make a few clarifications about \vampire and systems like it:
\begin{description}
	\item[Fully automatic.] \vampire is a completely automatic push-button system that requires no interaction from the user. Given an input, it terminates only to report that the input is a theorem or non-theorem\footnote{in some (incomplete) configurations, \vampire may also give up}, or if a user-specified resource bound is exceeded.
\item[No magic.] The validity of a statement in first-order logic is undecidable in theory and hard in practice. \vampire is very good at proving theorems and can even disprove some non-theorems, but still may not terminate in reasonable time.
\item[Classical first-order logic.] \vampire is a classical system. The law of the excluded middle and other dicta of classical logic --- such as the non-emptiness of domains --- are baked very deeply into the system and cannot be disabled. Similarly, \vampire draws a hard line between \emph{terms}, which are spoken about, and \emph{formulae}, which do the speaking.
\item[Trust.] \vampire has no proof terms, kernel, or similar safety mechanism, and its proofs are simply a series of steps in its internal calculus that should be treated as \emph{untrusted}.
\end{description}

\noindent
\vampire, like other saturation-based ATPs, works by refutation.
Given an input containing axioms and a conjecture, the conjecture is negated and added to the axioms to form an input set of formulae.
These formulae are then preprocessed into \emph{clause normal form} (CNF), a set of clauses.
\emph{Search} now begins: \vampire applies the inference rules of its calculus to derive more and more consequences of the clauses, working fairly so that a proof --- if it exists --- will eventually be found.
When \vampire derives the empty clause (falsum), it has refuted the negation of the input conjecture.
Classically, this is the same as proving the conjecture.

\vampire implements the \emph{superposition} calculus~\cite{paramodulation}, a complete calculus for classical first-order logic with special treatment for the equality predicate.
For the sake of efficiency the calculus has many restrictions on rules, but for our purposes this can be ignored.
The relevant inference rules are as follows:
\begin{mathpar}
\inferrule{L \vee C \\ \lnot L' \vee D}{\sigma(C \vee D)}\quad\textsc{Resolution}
\and
\inferrule{L \vee L' \vee C}{\sigma(L \vee C)}\quad\textsc{Factoring}
\and
\inferrule{l = r \vee C \\ L[l'] \vee D}{\sigma({L[r] \vee C \vee D})}\quad\textsc{Superposition}
\and
\inferrule{s \neq t \vee C}{\sigma(C)}\quad\textsc{Equality Resolution}
\end{mathpar}
where the substitution $\sigma$ is the most general unifier of $L$ and $L'$ in \textsc{Resolution} and \textsc{Factoring}; $l$ and $l'$ in \textsc{Superposition}; and $s$, $t$ in \textsc{Equality Resolution}.

To build intuition for these rules, it is helpful to imagine that the unified terms and literals are already identical in the premises.
\textsc{Resolution} can then be seen as case analysis on the Boolean $L$, \textsc{Factoring} as an application of the identity $(p \vee p) \equiv p$, \textsc{Superposition} as conditional rewriting, and \textsc{Equality Resolution} as removing an obviously-false literal $t \neq t$ from a clause.
The full version of each rule can be seen as first (partially) instantiating universal quantifiers according to $\sigma$, then applying the rule without unification.

\begin{example}[resolution]
The following concrete inference is an instance of the resolution rule. The most general unifier of $P(x, f(y))$ and $P(c, z))$ is $\{x \mapsto c,~z \mapsto f(y)\}$.
\begin{mathpar}
\inferrule{P(x, f(y)) \vee Q(y) \\ \lnot P(c, z) \vee \lnot R(z)}{Q(y) \vee \lnot R(f(y))}\quad\textsc{Resolution}
\end{mathpar}
\end{example}

\noindent
There are many other rules which we deal with later (\cref{sec:edge-cases}), but the above four rules form the core first-order reasoning of \vampire.
\vampire also has a type system~\cite{tff1,polymorphic-vampire} for terms: it is essentially System F restricted to rank-1 types.

\section{Method}
Using \vampire to dispatch Agda goals does not seem to be workable at first sight.
\vampire cannot process everything in Agda's vocabulary, and Agda will not accept \vampire's classical proofs, even if we get one.
Agda also lacks the automation required to do Sledgehammer-style proof reconstruction~\cite{sledgehammer} from a set of premises minimised by \vampire.
However, by restricting \emph{both} systems to a common language, we can give Agda goals directly to \vampire and translate \vampire's proofs directly to an Agda proof term.
\begin{tcolorbox}
\textbf{Observation.}
Many useful Agda premises and goals are of the form
$$\forall \bar{x}.~R_1 \to R_2 \to \ldots \to R_n$$
where every $R_i$ is an atomic type. This corresponds to the Horn fragment of FOL.
\end{tcolorbox}
\noindent
There are more details involved in translating an Agda goal into Horn logic (\cref{sec:agda-to-vampire}), but suppose for now that we have a set of definite clauses $\Gamma$ and a ground ``goal'' atom $G$, and we wish to show $\Gamma \vdash G$.
Assuming that this is provable, and that \vampire manages to prove it, we obtain a classical proof by refutation showing $\Gamma, \lnot G \vdash \bot$.
But we would simply like $\Gamma \vdash G$, with no extra negations.
This leads us to another observation.
\begin{tcolorbox}
\textbf{Observation.}
\vampire's inference rules conclude Horn clauses if the premises are also Horn.
Furthermore, \vampire's Horn proofs remain valid if goal clauses $C \to \bot$ are systematically replaced with $C \to G$, for some ground $G$.
\end{tcolorbox}
\noindent
This allows us to transform the proof mechanically, replacing occurrences of $\bot$ with $G$.
The result is a classical proof showing $\Gamma, (G \to G) \vdash G$, but $G \to G$ is trivial and we obtain $\Gamma \vdash G$ immediately.
But we still require a constructive proof.
\begin{tcolorbox}
\textbf{Observation.}
\vampire's core calculus is constructively valid.
\end{tcolorbox}
\noindent
The classical aspects of \vampire are confined to the normal form translation (which may e.g. perform double-negation elimination), and the principle of proof by refutation.
Overall, this means that an Agda hole $G$ in context $\Gamma$ can be translated into an input that \vampire can attempt (\cref{sec:agda-to-vampire}), and the resulting proof can be transformed (\cref{sec:proof-transform}) into a constructive proof of $\Gamma \vdash G$ as Agda terms (\cref{sec:vampire-to-agda}), automatically filling the hole.
We will now be precise about the details of this round-trip.

\begin{code}[hide]%
\>[0]\AgdaComment{--\ Uncomment\ for\ interactive\ editing\ with\ no\ extra\ flags.}\<%
\\
\>[0]\AgdaComment{--\ module\ papers.ITP26.agda-to-vampire\ where}\<%
\\
\\%
\>[0]\AgdaKeyword{open}\AgdaSpace{}%
\AgdaKeyword{import}\AgdaSpace{}%
\AgdaModule{Relation.Binary.PropositionalEquality}\AgdaSpace{}%
\AgdaKeyword{hiding}\AgdaSpace{}%
\AgdaSymbol{(}\AgdaOperator{\AgdaInductiveConstructor{[\AgdaUnderscore{}]}}\AgdaSymbol{)}\<%
\\
\>[0]\AgdaKeyword{open}\AgdaSpace{}%
\AgdaKeyword{import}\AgdaSpace{}%
\AgdaModule{Data.String}\AgdaSpace{}%
\AgdaKeyword{using}\AgdaSpace{}%
\AgdaSymbol{(}\AgdaPostulate{String}\AgdaSymbol{)}\<%
\\
\>[0]\AgdaKeyword{open}\AgdaSpace{}%
\AgdaKeyword{import}\AgdaSpace{}%
\AgdaModule{Data.List}\<%
\\
\>[0]\AgdaKeyword{open}\AgdaSpace{}%
\AgdaKeyword{import}\AgdaSpace{}%
\AgdaModule{Extract}\<%
\\
\>[0]\AgdaKeyword{open}\AgdaSpace{}%
\AgdaKeyword{import}\AgdaSpace{}%
\AgdaModule{Smt}\<%
\end{code}

\subsection{\label{sec:agda-to-vampire}Translating Agda Terms to Horn Clauses}

The translation of Agda formulae proceeds in three conceptual stages:
(i) reflecting the user‑defined definitions;
(ii) checking that these definitions lie within the supported language fragment; and
(iii) converting the reflected terms to an input language \vampire supports.
We chose SMT-LIB~\cite{smtlib} as it can express datatypes.
Unlike the strict fragment characterisation employed by Foster and Struth~\cite{struth-agda-atp},
we prioritise ease of use.  Consequently we rely on Agda's reflection
facilities rather than on labour‑intensive manual encodings.  This apparent
laxity is justified as the overall process of finding a valid proof may
fail.  There is no guarantee that the
back‑end ATP will discover a proof; even if a proof is found, the
translated proof may be unsound, which is why the imported proof must be
type‑checked.  Moreover, as our integration evolves,
this modular approach facilitates experimentation with new features.

\paragraph*{Reflection}
We provide a collection of macros that accept the name of an Agda definition
quote the term and emit an SMT-LIB representation.  In particular:
\begin{description}
\item[\AM{kty}] declares a new sort to \vampire.
\item[\AM{kdata}] translates an inductive algebraic
data type (without parameters or indices) into a built‑in datatype declaration,
handling constructors and ensuring well‑foundedness, thereby enabling inductive reasoning.
\item[\AM{kfun}] converts functions given by pattern‑matching equations
into a function symbol with the appropriate type signature, together with
clauses that become axioms.
\end{description}
To illustrate, consider addition on Peano‑encoded natural numbers:
\begin{code}[hide]%
\>[0]\AgdaKeyword{module}\AgdaSpace{}%
\AgdaModule{\AgdaUnderscore{}}\AgdaSpace{}%
\AgdaKeyword{where}\<%
\\
\>[0]\AgdaKeyword{module}\AgdaSpace{}%
\AgdaModule{Ex}\AgdaSpace{}%
\AgdaKeyword{where}\<%
\end{code}
\begin{code}%
\>[0][@{}l@{\AgdaIndent{1}}]%
\>[2]\AgdaKeyword{data}\AgdaSpace{}%
\AgdaDatatype{Nat}\AgdaSpace{}%
\AgdaSymbol{:}\AgdaSpace{}%
\AgdaPrimitive{Set}\AgdaSpace{}%
\AgdaKeyword{where}%
\>[33]\AgdaComment{--\ (declare-datatype\ nat\ }\<%
\\
\>[2][@{}l@{\AgdaIndent{0}}]%
\>[4]\AgdaInductiveConstructor{ze}\AgdaSpace{}%
\AgdaSymbol{:}\AgdaSpace{}%
\AgdaDatatype{Nat}%
\>[33]\AgdaComment{--\ \ ((ze)}\<%
\\
\>[4]\AgdaInductiveConstructor{su}\AgdaSpace{}%
\AgdaSymbol{:}\AgdaSpace{}%
\AgdaSymbol{(}\AgdaBound{n}\AgdaSpace{}%
\AgdaSymbol{:}\AgdaSpace{}%
\AgdaDatatype{Nat}\AgdaSymbol{)}\AgdaSpace{}%
\AgdaSymbol{→}\AgdaSpace{}%
\AgdaDatatype{Nat}%
\>[33]\AgdaComment{--\ \ \ (su\ (n\ nat))))\ }\<%
\\
\\%
\>[2]\AgdaFunction{add}\AgdaSpace{}%
\AgdaSymbol{:}\AgdaSpace{}%
\AgdaDatatype{Nat}\AgdaSpace{}%
\AgdaSymbol{→}\AgdaSpace{}%
\AgdaDatatype{Nat}\AgdaSpace{}%
\AgdaSymbol{→}\AgdaSpace{}%
\AgdaDatatype{Nat}%
\>[33]\AgdaComment{--\ (declare-fun\ add\ (nat\ nat)\ nat)}\<%
\\
\>[2]\AgdaFunction{add}\AgdaSpace{}%
\AgdaInductiveConstructor{ze}\AgdaSpace{}%
\AgdaBound{x}\AgdaSpace{}%
\AgdaSymbol{=}\AgdaSpace{}%
\AgdaBound{x}%
\>[33]\AgdaComment{--\ (assert\ (!\ (forall\ ((x\ nat))\ }\<%
\\
\>[33]\AgdaComment{--\ \ \ \ \ \ \ \ \ \ \ \ (=\ (add\ ze\ x)\ x))\ :named\ add-clause-1))}\<%
\\
\>[2]\AgdaFunction{add}\AgdaSpace{}%
\AgdaSymbol{(}\AgdaInductiveConstructor{su}\AgdaSpace{}%
\AgdaBound{x}\AgdaSymbol{)}\AgdaSpace{}%
\AgdaBound{y}\AgdaSpace{}%
\AgdaSymbol{=}\AgdaSpace{}%
\AgdaInductiveConstructor{su}\AgdaSpace{}%
\AgdaSymbol{(}\AgdaFunction{add}\AgdaSpace{}%
\AgdaBound{x}\AgdaSpace{}%
\AgdaBound{y}\AgdaSymbol{)}%
\>[33]\AgdaComment{--\ (assert\ (!\ (forall\ ((x\ nat)(y\ nat))}\<%
\\
\>[33]\AgdaComment{--\ \ \ \ \ \ \ \ \ \ \ \ (=\ (add\ (su\ x)\ y)\ (su\ (add\ x\ y))))}\<%
\\
\>[33]\AgdaComment{--\ \ :named\ add-clause-2))}\<%
\end{code}
Stated theorems (\eg{} some facts about natural numbers) can be turned
into an axiom using the \AM{kthm} macro; this simply translates the type into an assertion.
Note that we do not verify the existence of a proof --- we only inspect the type.
Finally, the \AM{kgoal} macro is used to translate goals that we wish
to prove.

\paragraph*{Conformity checks}
The macros above impose several constraints on the shape of the supplied
datatypes and functions.  If these constraints are violated, the translation
aborts with an error. Specifically:
\begin{itemize}
\item For inductively defined datatypes, parameters
or indices are not permitted.  Although \vampire offers limited support for
parametrised types such as polymorphic lists, we currently do not
handle them.
\item For functions we demand arguments of simple, non‑dependent types, thereby
forbiding dependent functions.
\item For theorems we enforce that the overall type resides in the Horn
fragment.  Consequently, the type may contain a mixture of named arguments of
simple datatypes and propositional components that reference variables of
simple types, while the right‑hand side must be a proposition.  At present the
only supported proposition is propositional equality\footnote{although \vampire and the proof conversion can handle arbitrary relations}, which implicitly means
that we assume Uniqueness of Identity Proofs (UIP) in Agda.
\item Goals, like theorems, must be in Horn form.  During translation we
convert all premises of the goal into named constants and negate the right‑hand
side using \texttt{assert‑not}\footnote{\texttt{(assert-not F)} is a Vampiric extension with the same semantics as \texttt{(assert (not F))}, but with the additional effect of marking \texttt{F} as a goal for \vampire's internal heuristics}.
\end{itemize}

\paragraph*{Translation}
We define a type to represent a subset of SMT-LIB programs.  Macros tranlsate
turn into the element of this type which is subsequently rendered 
as a string.  This string is fed to \vampire; the
resulting proof is processed by the proof‑converter described in
Section~\ref{sec:vampire-to-agda} and finally emitted as Agda code.

\paragraph*{Example}
To demonstrate the workflow, we consider a rudimentary theory of vector spaces
comprising a zero vector \AF{ze}, addition \AF{\_+\_}, and negation \AF{-\_},
together with three axioms: associativity of addition, left‑hand neutrality of
zero, and left‑hand inverse for addition:
\begin{mathpar}
\begin{code}[hide]%
\>[0]\AgdaKeyword{postulate}\<%
\end{code}
\codeblock{\begin{code}%
\>[0][@{}l@{\AgdaIndent{1}}]%
\>[2]\AgdaPostulate{V}%
\>[8]\AgdaSymbol{:}\AgdaSpace{}%
\AgdaPrimitive{Set}\<%
\\
\>[2]\AgdaPostulate{ze}%
\>[8]\AgdaSymbol{:}\AgdaSpace{}%
\AgdaPostulate{V}\<%
\\
\>[2]\AgdaOperator{\AgdaPostulate{\AgdaUnderscore{}+\AgdaUnderscore{}}}%
\>[8]\AgdaSymbol{:}\AgdaSpace{}%
\AgdaPostulate{V}\AgdaSpace{}%
\AgdaSymbol{→}\AgdaSpace{}%
\AgdaPostulate{V}\AgdaSpace{}%
\AgdaSymbol{→}\AgdaSpace{}%
\AgdaPostulate{V}\<%
\\
\>[2]\AgdaOperator{\AgdaPostulate{-\AgdaUnderscore{}}}%
\>[8]\AgdaSymbol{:}\AgdaSpace{}%
\AgdaPostulate{V}\AgdaSpace{}%
\AgdaSymbol{→}\AgdaSpace{}%
\AgdaPostulate{V}\<%
\end{code}
}
\and
\codeblock{\begin{code}%
\>[2]\AgdaPostulate{assoc}\AgdaSpace{}%
\AgdaSymbol{:}\AgdaSpace{}%
\AgdaSymbol{∀}\AgdaSpace{}%
\AgdaBound{u}\AgdaSpace{}%
\AgdaBound{v}\AgdaSpace{}%
\AgdaBound{w}\AgdaSpace{}%
\AgdaSymbol{→}\AgdaSpace{}%
\AgdaSymbol{(}\AgdaBound{u}\AgdaSpace{}%
\AgdaOperator{\AgdaPostulate{+}}\AgdaSpace{}%
\AgdaBound{v}\AgdaSymbol{)}\AgdaSpace{}%
\AgdaOperator{\AgdaPostulate{+}}\AgdaSpace{}%
\AgdaBound{w}\AgdaSpace{}%
\AgdaOperator{\AgdaDatatype{≡}}\AgdaSpace{}%
\AgdaBound{u}\AgdaSpace{}%
\AgdaOperator{\AgdaPostulate{+}}\AgdaSpace{}%
\AgdaSymbol{(}\AgdaBound{v}\AgdaSpace{}%
\AgdaOperator{\AgdaPostulate{+}}\AgdaSpace{}%
\AgdaBound{w}\AgdaSymbol{)}\<%
\\
\>[2]\AgdaPostulate{neutl}\AgdaSpace{}%
\AgdaSymbol{:}\AgdaSpace{}%
\AgdaSymbol{∀}\AgdaSpace{}%
\AgdaBound{u}\AgdaSpace{}%
\AgdaSymbol{→}\AgdaSpace{}%
\AgdaPostulate{ze}\AgdaSpace{}%
\AgdaOperator{\AgdaPostulate{+}}\AgdaSpace{}%
\AgdaBound{u}\AgdaSpace{}%
\AgdaOperator{\AgdaDatatype{≡}}\AgdaSpace{}%
\AgdaBound{u}\<%
\\
\>[2]\AgdaPostulate{negl}%
\>[8]\AgdaSymbol{:}\AgdaSpace{}%
\AgdaSymbol{∀}\AgdaSpace{}%
\AgdaBound{u}\AgdaSpace{}%
\AgdaSymbol{→}\AgdaSpace{}%
\AgdaSymbol{(}\AgdaOperator{\AgdaPostulate{-}}\AgdaSpace{}%
\AgdaBound{u}\AgdaSymbol{)}\AgdaSpace{}%
\AgdaOperator{\AgdaPostulate{+}}\AgdaSpace{}%
\AgdaBound{u}\AgdaSpace{}%
\AgdaOperator{\AgdaDatatype{≡}}\AgdaSpace{}%
\AgdaPostulate{ze}\<%
\end{code}
}
\end{mathpar}
These declarations suffice to state the uniqueness of the zero vector,
expressed by the following theorem (the proof will be obtained from \vampire):
\begin{code}%
\>[0]\AgdaFunction{ze-uniq}\AgdaSpace{}%
\AgdaSymbol{:}\AgdaSpace{}%
\AgdaSymbol{∀}\AgdaSpace{}%
\AgdaBound{u}\AgdaSpace{}%
\AgdaBound{v}\AgdaSpace{}%
\AgdaSymbol{→}\AgdaSpace{}%
\AgdaSymbol{(}\AgdaBound{e}\AgdaSpace{}%
\AgdaSymbol{:}\AgdaSpace{}%
\AgdaBound{u}\AgdaSpace{}%
\AgdaOperator{\AgdaPostulate{+}}\AgdaSpace{}%
\AgdaBound{v}\AgdaSpace{}%
\AgdaOperator{\AgdaDatatype{≡}}\AgdaSpace{}%
\AgdaBound{v}\AgdaSymbol{)}\AgdaSpace{}%
\AgdaSymbol{→}\AgdaSpace{}%
\AgdaBound{u}\AgdaSpace{}%
\AgdaOperator{\AgdaDatatype{≡}}\AgdaSpace{}%
\AgdaPostulate{ze}\<%
\end{code}
To obtain the goal we first extract the type \AF{V}, the types of the
functions \AF{\_+\_}, \AF{-\_} and the constant \AF{ze} (treated as a nullary
function).  We then supply the three axioms and finally specify the goal to be
proved.  As can be seen, only the names of the definitions need to be provided;
the rest of the conversion is performed automatically by the reflection machinery:
\begin{code}%
\>[0]\AgdaFunction{ze-uniq-v}\AgdaSpace{}%
\AgdaSymbol{:}\AgdaSpace{}%
\AgdaPostulate{String}\<%
\\
\>[0]\AgdaFunction{ze-uniq-v}\AgdaSpace{}%
\AgdaSymbol{=}\AgdaSpace{}%
\AgdaFunction{print-prog}%
\>[24]\AgdaSymbol{(}\AgdaMacro{kty}\AgdaSpace{}%
\AgdaPostulate{V}\AgdaSpace{}%
\AgdaOperator{\AgdaInductiveConstructor{∷}}\AgdaSpace{}%
\AgdaMacro{kfun-ty}\AgdaSpace{}%
\AgdaOperator{\AgdaPostulate{\AgdaUnderscore{}+\AgdaUnderscore{}}}\AgdaSpace{}%
\AgdaOperator{\AgdaInductiveConstructor{∷}}\AgdaSpace{}%
\AgdaMacro{kfun-ty}\AgdaSpace{}%
\AgdaOperator{\AgdaPostulate{-\AgdaUnderscore{}}}\AgdaSpace{}%
\AgdaOperator{\AgdaInductiveConstructor{∷}}\AgdaSpace{}%
\AgdaMacro{kfun-ty}\AgdaSpace{}%
\AgdaPostulate{ze}\AgdaSpace{}%
\AgdaOperator{\AgdaInductiveConstructor{∷}}\AgdaSpace{}%
\AgdaMacro{kthm}\AgdaSpace{}%
\AgdaPostulate{neutl}\<%
\\
\>[24][@{}l@{\AgdaIndent{0}}]%
\>[25]\AgdaOperator{\AgdaInductiveConstructor{∷}}\AgdaSpace{}%
\AgdaMacro{kthm}\AgdaSpace{}%
\AgdaPostulate{negl}%
\>[38]\AgdaOperator{\AgdaInductiveConstructor{∷}}\AgdaSpace{}%
\AgdaMacro{kthm}\AgdaSpace{}%
\AgdaPostulate{assoc}\AgdaSpace{}%
\AgdaOperator{\AgdaInductiveConstructor{∷}}\AgdaSpace{}%
\AgdaMacro{kgoal}\AgdaSpace{}%
\AgdaFunction{ze-uniq}\AgdaSpace{}%
\AgdaOperator{\AgdaInductiveConstructor{∷}}\AgdaSpace{}%
\AgdaInductiveConstructor{[]}\AgdaSymbol{)}\<%
\end{code}
Evaluating \AF{ze‑uniq‑v} yields the following SMT-LIB:
\begin{Verbatim}[fontsize=\small]
(declare-sort vec.V 0)
(declare-fun vec._+_ (vec.V vec.V) vec.V)
(declare-fun vec.-_ (vec.V) vec.V)
(declare-const vec.ze vec.V)
(assert (! (forall ((u vec.V)) (= (vec._+_ (vec.ze ) u) u)) :named vec.neutl))
(assert (! (forall ((u vec.V)) (= (vec._+_ (vec.-_ u) u) (vec.ze ))) :named vec.negl))
(assert (! (forall ((u vec.V) (v vec.V) (w vec.V))
                    (= (vec._+_ (vec._+_ u v) w) (vec._+_ u (vec._+_ v w))))
                    :named vec.assoc))
(declare-const u vec.V)
(declare-const v vec.V)
(assert (! (forall () (= (vec._+_ u v) v)) :named e))
(assert-not (! (= u (vec.ze )) :named vec.ze-uniq))
\end{Verbatim}
The required proof is non‑trivial and is not discovered by Agda's built‑in
automatic proof search.
\vampire, however, finds a correct proof instantly.

\begin{code}[hide]%
\>[0]\AgdaFunction{ze-uniq}\AgdaSpace{}%
\AgdaBound{u}\AgdaSpace{}%
\AgdaBound{v}\AgdaSpace{}%
\AgdaBound{e}\AgdaSpace{}%
\AgdaSymbol{=}\AgdaSpace{}%
\AgdaFunction{vampagda-f93}\AgdaSpace{}%
\AgdaKeyword{where}\<%
\\
\>[0][@{}l@{\AgdaIndent{0}}]%
\>[2]\AgdaFunction{vampagda-f1}\AgdaSpace{}%
\AgdaSymbol{:}\AgdaSpace{}%
\AgdaSymbol{∀}\AgdaSpace{}%
\AgdaSymbol{(}\AgdaBound{x0}\AgdaSpace{}%
\AgdaSymbol{:}\AgdaSpace{}%
\AgdaPostulate{V}\AgdaSymbol{)}\AgdaSpace{}%
\AgdaSymbol{→}\AgdaSpace{}%
\AgdaSymbol{(}\AgdaOperator{\AgdaPostulate{\AgdaUnderscore{}+\AgdaUnderscore{}}}\AgdaSpace{}%
\AgdaPostulate{ze}\AgdaSpace{}%
\AgdaBound{x0}\AgdaSymbol{)}\AgdaSpace{}%
\AgdaOperator{\AgdaDatatype{≡}}\AgdaSpace{}%
\AgdaBound{x0}\<%
\\
\>[2]\AgdaFunction{vampagda-f1}\AgdaSpace{}%
\AgdaBound{x0}\AgdaSpace{}%
\AgdaSymbol{=}\AgdaSpace{}%
\AgdaSymbol{(}\AgdaPostulate{neutl}\AgdaSpace{}%
\AgdaBound{x0}\AgdaSymbol{)}\<%
\\
\>[2]\AgdaFunction{vampagda-f2}\AgdaSpace{}%
\AgdaSymbol{:}\AgdaSpace{}%
\AgdaSymbol{∀}\AgdaSpace{}%
\AgdaSymbol{(}\AgdaBound{x0}\AgdaSpace{}%
\AgdaSymbol{:}\AgdaSpace{}%
\AgdaPostulate{V}\AgdaSymbol{)}\AgdaSpace{}%
\AgdaSymbol{→}\AgdaSpace{}%
\AgdaSymbol{(}\AgdaOperator{\AgdaPostulate{\AgdaUnderscore{}+\AgdaUnderscore{}}}\AgdaSpace{}%
\AgdaSymbol{(}\AgdaOperator{\AgdaPostulate{-\AgdaUnderscore{}}}\AgdaSpace{}%
\AgdaBound{x0}\AgdaSymbol{)}\AgdaSpace{}%
\AgdaBound{x0}\AgdaSymbol{)}\AgdaSpace{}%
\AgdaOperator{\AgdaDatatype{≡}}\AgdaSpace{}%
\AgdaPostulate{ze}\<%
\\
\>[2]\AgdaFunction{vampagda-f2}\AgdaSpace{}%
\AgdaBound{x0}\AgdaSpace{}%
\AgdaSymbol{=}\AgdaSpace{}%
\AgdaSymbol{(}\AgdaPostulate{negl}\AgdaSpace{}%
\AgdaBound{x0}\AgdaSymbol{)}\<%
\\
\>[2]\AgdaFunction{vampagda-f3}\AgdaSpace{}%
\AgdaSymbol{:}\AgdaSpace{}%
\AgdaSymbol{∀}\AgdaSpace{}%
\AgdaSymbol{(}\AgdaBound{x0}\AgdaSpace{}%
\AgdaSymbol{:}\AgdaSpace{}%
\AgdaPostulate{V}\AgdaSymbol{)}\AgdaSpace{}%
\AgdaSymbol{(}\AgdaBound{x1}\AgdaSpace{}%
\AgdaSymbol{:}\AgdaSpace{}%
\AgdaPostulate{V}\AgdaSymbol{)}\AgdaSpace{}%
\AgdaSymbol{(}\AgdaBound{x2}\AgdaSpace{}%
\AgdaSymbol{:}\AgdaSpace{}%
\AgdaPostulate{V}\AgdaSymbol{)}\AgdaSpace{}%
\AgdaSymbol{→}\AgdaSpace{}%
\AgdaSymbol{(}\AgdaOperator{\AgdaPostulate{\AgdaUnderscore{}+\AgdaUnderscore{}}}\AgdaSpace{}%
\AgdaSymbol{(}\AgdaOperator{\AgdaPostulate{\AgdaUnderscore{}+\AgdaUnderscore{}}}\AgdaSpace{}%
\AgdaBound{x0}\AgdaSpace{}%
\AgdaBound{x1}\AgdaSymbol{)}\AgdaSpace{}%
\AgdaBound{x2}\AgdaSymbol{)}\AgdaSpace{}%
\AgdaOperator{\AgdaDatatype{≡}}\AgdaSpace{}%
\AgdaSymbol{(}\AgdaOperator{\AgdaPostulate{\AgdaUnderscore{}+\AgdaUnderscore{}}}\AgdaSpace{}%
\AgdaBound{x0}\AgdaSpace{}%
\AgdaSymbol{(}\AgdaOperator{\AgdaPostulate{\AgdaUnderscore{}+\AgdaUnderscore{}}}\AgdaSpace{}%
\AgdaBound{x1}\AgdaSpace{}%
\AgdaBound{x2}\AgdaSymbol{))}\<%
\\
\>[2]\AgdaFunction{vampagda-f3}\AgdaSpace{}%
\AgdaBound{x0}\AgdaSpace{}%
\AgdaBound{x1}\AgdaSpace{}%
\AgdaBound{x2}\AgdaSpace{}%
\AgdaSymbol{=}\AgdaSpace{}%
\AgdaSymbol{(}\AgdaPostulate{assoc}\AgdaSpace{}%
\AgdaBound{x0}\AgdaSpace{}%
\AgdaBound{x1}\AgdaSpace{}%
\AgdaBound{x2}\AgdaSymbol{)}\<%
\\
\>[2]\AgdaFunction{vampagda-f4}\AgdaSpace{}%
\AgdaSymbol{:}\AgdaSpace{}%
\AgdaSymbol{(}\AgdaOperator{\AgdaPostulate{\AgdaUnderscore{}+\AgdaUnderscore{}}}\AgdaSpace{}%
\AgdaBound{u}\AgdaSpace{}%
\AgdaBound{v}\AgdaSymbol{)}\AgdaSpace{}%
\AgdaOperator{\AgdaDatatype{≡}}\AgdaSpace{}%
\AgdaBound{v}\<%
\\
\>[2]\AgdaFunction{vampagda-f4}\AgdaSpace{}%
\AgdaSymbol{=}\AgdaSpace{}%
\AgdaBound{e}\<%
\\
\>[2]\AgdaFunction{vampagda-f6}\AgdaSpace{}%
\AgdaSymbol{:}\AgdaSpace{}%
\AgdaBound{u}\AgdaSpace{}%
\AgdaOperator{\AgdaDatatype{≡}}\AgdaSpace{}%
\AgdaPostulate{ze}\AgdaSpace{}%
\AgdaSymbol{→}\AgdaSpace{}%
\AgdaBound{u}\AgdaSpace{}%
\AgdaOperator{\AgdaDatatype{≡}}\AgdaSpace{}%
\AgdaPostulate{ze}\<%
\\
\>[2]\AgdaFunction{vampagda-f6}\AgdaSpace{}%
\AgdaBound{l0}\AgdaSpace{}%
\AgdaSymbol{=}\AgdaSpace{}%
\AgdaBound{l0}\<%
\\
\>[2]\AgdaFunction{vampagda-f7}\AgdaSpace{}%
\AgdaSymbol{:}\AgdaSpace{}%
\AgdaPostulate{ze}\AgdaSpace{}%
\AgdaOperator{\AgdaDatatype{≡}}\AgdaSpace{}%
\AgdaBound{u}\AgdaSpace{}%
\AgdaSymbol{→}\AgdaSpace{}%
\AgdaBound{u}\AgdaSpace{}%
\AgdaOperator{\AgdaDatatype{≡}}\AgdaSpace{}%
\AgdaPostulate{ze}\<%
\\
\>[2]\AgdaFunction{vampagda-f7}\AgdaSpace{}%
\AgdaBound{l0}\AgdaSpace{}%
\AgdaSymbol{=}\AgdaSpace{}%
\AgdaSymbol{(}\AgdaFunction{vampagda-f6}\AgdaSpace{}%
\AgdaSymbol{(}\AgdaFunction{sym}\AgdaSpace{}%
\AgdaBound{l0}\AgdaSymbol{))}\<%
\\
\>[2]\AgdaFunction{vampagda-f8}\AgdaSpace{}%
\AgdaSymbol{:}\AgdaSpace{}%
\AgdaSymbol{∀}\AgdaSpace{}%
\AgdaSymbol{(}\AgdaBound{x0}\AgdaSpace{}%
\AgdaSymbol{:}\AgdaSpace{}%
\AgdaPostulate{V}\AgdaSymbol{)}\AgdaSpace{}%
\AgdaSymbol{→}\AgdaSpace{}%
\AgdaSymbol{(}\AgdaOperator{\AgdaPostulate{\AgdaUnderscore{}+\AgdaUnderscore{}}}\AgdaSpace{}%
\AgdaPostulate{ze}\AgdaSpace{}%
\AgdaBound{x0}\AgdaSymbol{)}\AgdaSpace{}%
\AgdaOperator{\AgdaDatatype{≡}}\AgdaSpace{}%
\AgdaBound{x0}\<%
\\
\>[2]\AgdaFunction{vampagda-f8}\AgdaSpace{}%
\AgdaBound{x0}\AgdaSpace{}%
\AgdaSymbol{=}\AgdaSpace{}%
\AgdaSymbol{(}\AgdaFunction{vampagda-f1}\AgdaSpace{}%
\AgdaBound{x0}\AgdaSymbol{)}\<%
\\
\>[2]\AgdaFunction{vampagda-f9}\AgdaSpace{}%
\AgdaSymbol{:}\AgdaSpace{}%
\AgdaSymbol{∀}\AgdaSpace{}%
\AgdaSymbol{(}\AgdaBound{x0}\AgdaSpace{}%
\AgdaSymbol{:}\AgdaSpace{}%
\AgdaPostulate{V}\AgdaSymbol{)}\AgdaSpace{}%
\AgdaSymbol{→}\AgdaSpace{}%
\AgdaPostulate{ze}\AgdaSpace{}%
\AgdaOperator{\AgdaDatatype{≡}}\AgdaSpace{}%
\AgdaSymbol{(}\AgdaOperator{\AgdaPostulate{\AgdaUnderscore{}+\AgdaUnderscore{}}}\AgdaSpace{}%
\AgdaSymbol{(}\AgdaOperator{\AgdaPostulate{-\AgdaUnderscore{}}}\AgdaSpace{}%
\AgdaBound{x0}\AgdaSymbol{)}\AgdaSpace{}%
\AgdaBound{x0}\AgdaSymbol{)}\<%
\\
\>[2]\AgdaFunction{vampagda-f9}\AgdaSpace{}%
\AgdaBound{x0}\AgdaSpace{}%
\AgdaSymbol{=}\AgdaSpace{}%
\AgdaSymbol{(}\AgdaFunction{sym}\AgdaSpace{}%
\AgdaSymbol{(}\AgdaFunction{vampagda-f2}\AgdaSpace{}%
\AgdaBound{x0}\AgdaSymbol{))}\<%
\\
\>[2]\AgdaFunction{vampagda-f10}\AgdaSpace{}%
\AgdaSymbol{:}\AgdaSpace{}%
\AgdaSymbol{∀}\AgdaSpace{}%
\AgdaSymbol{(}\AgdaBound{x0}\AgdaSpace{}%
\AgdaSymbol{:}\AgdaSpace{}%
\AgdaPostulate{V}\AgdaSymbol{)}\AgdaSpace{}%
\AgdaSymbol{(}\AgdaBound{x1}\AgdaSpace{}%
\AgdaSymbol{:}\AgdaSpace{}%
\AgdaPostulate{V}\AgdaSymbol{)}\AgdaSpace{}%
\AgdaSymbol{(}\AgdaBound{x2}\AgdaSpace{}%
\AgdaSymbol{:}\AgdaSpace{}%
\AgdaPostulate{V}\AgdaSymbol{)}\AgdaSpace{}%
\AgdaSymbol{→}\AgdaSpace{}%
\AgdaSymbol{(}\AgdaOperator{\AgdaPostulate{\AgdaUnderscore{}+\AgdaUnderscore{}}}\AgdaSpace{}%
\AgdaBound{x1}\AgdaSpace{}%
\AgdaSymbol{(}\AgdaOperator{\AgdaPostulate{\AgdaUnderscore{}+\AgdaUnderscore{}}}\AgdaSpace{}%
\AgdaBound{x2}\AgdaSpace{}%
\AgdaBound{x0}\AgdaSymbol{))}\AgdaSpace{}%
\AgdaOperator{\AgdaDatatype{≡}}\AgdaSpace{}%
\AgdaSymbol{(}\AgdaOperator{\AgdaPostulate{\AgdaUnderscore{}+\AgdaUnderscore{}}}\AgdaSpace{}%
\AgdaSymbol{(}\AgdaOperator{\AgdaPostulate{\AgdaUnderscore{}+\AgdaUnderscore{}}}\AgdaSpace{}%
\AgdaBound{x1}\AgdaSpace{}%
\AgdaBound{x2}\AgdaSymbol{)}\AgdaSpace{}%
\AgdaBound{x0}\AgdaSymbol{)}\<%
\\
\>[2]\AgdaFunction{vampagda-f10}\AgdaSpace{}%
\AgdaBound{x0}\AgdaSpace{}%
\AgdaBound{x1}\AgdaSpace{}%
\AgdaBound{x2}\AgdaSpace{}%
\AgdaSymbol{=}\AgdaSpace{}%
\AgdaSymbol{(}\AgdaFunction{sym}\AgdaSpace{}%
\AgdaSymbol{(}\AgdaFunction{vampagda-f3}\AgdaSpace{}%
\AgdaBound{x1}\AgdaSpace{}%
\AgdaBound{x2}\AgdaSpace{}%
\AgdaBound{x0}\AgdaSymbol{))}\<%
\\
\>[2]\AgdaFunction{vampagda-f11}\AgdaSpace{}%
\AgdaSymbol{:}\AgdaSpace{}%
\AgdaBound{v}\AgdaSpace{}%
\AgdaOperator{\AgdaDatatype{≡}}\AgdaSpace{}%
\AgdaSymbol{(}\AgdaOperator{\AgdaPostulate{\AgdaUnderscore{}+\AgdaUnderscore{}}}\AgdaSpace{}%
\AgdaBound{u}\AgdaSpace{}%
\AgdaBound{v}\AgdaSymbol{)}\<%
\\
\>[2]\AgdaFunction{vampagda-f11}\AgdaSpace{}%
\AgdaSymbol{=}\AgdaSpace{}%
\AgdaSymbol{(}\AgdaFunction{sym}\AgdaSpace{}%
\AgdaFunction{vampagda-f4}\AgdaSymbol{)}\<%
\\
\>[2]\AgdaFunction{vampagda-f12}\AgdaSpace{}%
\AgdaSymbol{:}\AgdaSpace{}%
\AgdaPostulate{ze}\AgdaSpace{}%
\AgdaOperator{\AgdaDatatype{≡}}\AgdaSpace{}%
\AgdaBound{u}\AgdaSpace{}%
\AgdaSymbol{→}\AgdaSpace{}%
\AgdaBound{u}\AgdaSpace{}%
\AgdaOperator{\AgdaDatatype{≡}}\AgdaSpace{}%
\AgdaPostulate{ze}\<%
\\
\>[2]\AgdaFunction{vampagda-f12}\AgdaSpace{}%
\AgdaBound{l0}\AgdaSpace{}%
\AgdaSymbol{=}\AgdaSpace{}%
\AgdaSymbol{(}\AgdaFunction{vampagda-f7}\AgdaSpace{}%
\AgdaBound{l0}\AgdaSymbol{)}\<%
\\
\>[2]\AgdaFunction{vampagda-f14}\AgdaSpace{}%
\AgdaSymbol{:}\AgdaSpace{}%
\AgdaSymbol{∀}\AgdaSpace{}%
\AgdaSymbol{(}\AgdaBound{x0}\AgdaSpace{}%
\AgdaSymbol{:}\AgdaSpace{}%
\AgdaPostulate{V}\AgdaSymbol{)}\AgdaSpace{}%
\AgdaSymbol{(}\AgdaBound{x1}\AgdaSpace{}%
\AgdaSymbol{:}\AgdaSpace{}%
\AgdaPostulate{V}\AgdaSymbol{)}\AgdaSpace{}%
\AgdaSymbol{→}\AgdaSpace{}%
\AgdaSymbol{(}\AgdaOperator{\AgdaPostulate{\AgdaUnderscore{}+\AgdaUnderscore{}}}\AgdaSpace{}%
\AgdaSymbol{(}\AgdaOperator{\AgdaPostulate{-\AgdaUnderscore{}}}\AgdaSpace{}%
\AgdaBound{x0}\AgdaSymbol{)}\AgdaSpace{}%
\AgdaSymbol{(}\AgdaOperator{\AgdaPostulate{\AgdaUnderscore{}+\AgdaUnderscore{}}}\AgdaSpace{}%
\AgdaBound{x0}\AgdaSpace{}%
\AgdaBound{x1}\AgdaSymbol{))}\AgdaSpace{}%
\AgdaOperator{\AgdaDatatype{≡}}\AgdaSpace{}%
\AgdaSymbol{(}\AgdaOperator{\AgdaPostulate{\AgdaUnderscore{}+\AgdaUnderscore{}}}\AgdaSpace{}%
\AgdaPostulate{ze}\AgdaSpace{}%
\AgdaBound{x1}\AgdaSymbol{)}\<%
\\
\>[2]\AgdaFunction{vampagda-f14}\AgdaSpace{}%
\AgdaBound{x0}\AgdaSpace{}%
\AgdaBound{x1}\AgdaSpace{}%
\AgdaSymbol{=}\AgdaSpace{}%
\AgdaSymbol{(}\AgdaFunction{trans}\AgdaSpace{}%
\AgdaSymbol{(}\AgdaFunction{vampagda-f10}\AgdaSpace{}%
\AgdaBound{x1}\AgdaSpace{}%
\AgdaSymbol{(}\AgdaOperator{\AgdaPostulate{-\AgdaUnderscore{}}}\AgdaSpace{}%
\AgdaBound{x0}\AgdaSymbol{)}\AgdaSpace{}%
\AgdaBound{x0}\AgdaSymbol{)}\AgdaSpace{}%
\AgdaSymbol{(}\AgdaFunction{cong}\AgdaSpace{}%
\AgdaSymbol{(λ}\AgdaSpace{}%
\AgdaBound{❌}\AgdaSpace{}%
\AgdaSymbol{→}\AgdaSpace{}%
\AgdaSymbol{(}\AgdaOperator{\AgdaPostulate{\AgdaUnderscore{}+\AgdaUnderscore{}}}\AgdaSpace{}%
\AgdaBound{❌}\AgdaSpace{}%
\AgdaBound{x1}\AgdaSymbol{))}\AgdaSpace{}%
\AgdaSymbol{(}\AgdaFunction{sym}\AgdaSpace{}%
\AgdaSymbol{(}\AgdaFunction{vampagda-f9}\AgdaSpace{}%
\AgdaBound{x0}\AgdaSymbol{))))}\<%
\\
\>[2]\AgdaFunction{vampagda-f16}\AgdaSpace{}%
\AgdaSymbol{:}\AgdaSpace{}%
\AgdaSymbol{∀}\AgdaSpace{}%
\AgdaSymbol{(}\AgdaBound{x0}\AgdaSpace{}%
\AgdaSymbol{:}\AgdaSpace{}%
\AgdaPostulate{V}\AgdaSymbol{)}\AgdaSpace{}%
\AgdaSymbol{→}\AgdaSpace{}%
\AgdaSymbol{(}\AgdaOperator{\AgdaPostulate{\AgdaUnderscore{}+\AgdaUnderscore{}}}\AgdaSpace{}%
\AgdaBound{v}\AgdaSpace{}%
\AgdaBound{x0}\AgdaSymbol{)}\AgdaSpace{}%
\AgdaOperator{\AgdaDatatype{≡}}\AgdaSpace{}%
\AgdaSymbol{(}\AgdaOperator{\AgdaPostulate{\AgdaUnderscore{}+\AgdaUnderscore{}}}\AgdaSpace{}%
\AgdaBound{u}\AgdaSpace{}%
\AgdaSymbol{(}\AgdaOperator{\AgdaPostulate{\AgdaUnderscore{}+\AgdaUnderscore{}}}\AgdaSpace{}%
\AgdaBound{v}\AgdaSpace{}%
\AgdaBound{x0}\AgdaSymbol{))}\<%
\\
\>[2]\AgdaFunction{vampagda-f16}\AgdaSpace{}%
\AgdaBound{x0}\AgdaSpace{}%
\AgdaSymbol{=}\AgdaSpace{}%
\AgdaSymbol{(}\AgdaFunction{trans}\AgdaSpace{}%
\AgdaSymbol{(}\AgdaFunction{cong}\AgdaSpace{}%
\AgdaSymbol{(λ}\AgdaSpace{}%
\AgdaBound{❌}\AgdaSpace{}%
\AgdaSymbol{→}\AgdaSpace{}%
\AgdaSymbol{(}\AgdaOperator{\AgdaPostulate{\AgdaUnderscore{}+\AgdaUnderscore{}}}\AgdaSpace{}%
\AgdaBound{❌}\AgdaSpace{}%
\AgdaBound{x0}\AgdaSymbol{))}\AgdaSpace{}%
\AgdaFunction{vampagda-f11}\AgdaSymbol{)}\AgdaSpace{}%
\AgdaSymbol{(}\AgdaFunction{sym}\AgdaSpace{}%
\AgdaSymbol{(}\AgdaFunction{vampagda-f10}\AgdaSpace{}%
\AgdaBound{x0}\AgdaSpace{}%
\AgdaBound{u}\AgdaSpace{}%
\AgdaBound{v}\AgdaSymbol{)))}\<%
\\
\>[2]\AgdaFunction{vampagda-f17}\AgdaSpace{}%
\AgdaSymbol{:}\AgdaSpace{}%
\AgdaSymbol{∀}\AgdaSpace{}%
\AgdaSymbol{(}\AgdaBound{x0}\AgdaSpace{}%
\AgdaSymbol{:}\AgdaSpace{}%
\AgdaPostulate{V}\AgdaSymbol{)}\AgdaSpace{}%
\AgdaSymbol{(}\AgdaBound{x1}\AgdaSpace{}%
\AgdaSymbol{:}\AgdaSpace{}%
\AgdaPostulate{V}\AgdaSymbol{)}\AgdaSpace{}%
\AgdaSymbol{→}\AgdaSpace{}%
\AgdaSymbol{(}\AgdaOperator{\AgdaPostulate{\AgdaUnderscore{}+\AgdaUnderscore{}}}\AgdaSpace{}%
\AgdaSymbol{(}\AgdaOperator{\AgdaPostulate{-\AgdaUnderscore{}}}\AgdaSpace{}%
\AgdaBound{x0}\AgdaSymbol{)}\AgdaSpace{}%
\AgdaSymbol{(}\AgdaOperator{\AgdaPostulate{\AgdaUnderscore{}+\AgdaUnderscore{}}}\AgdaSpace{}%
\AgdaBound{x0}\AgdaSpace{}%
\AgdaBound{x1}\AgdaSymbol{))}\AgdaSpace{}%
\AgdaOperator{\AgdaDatatype{≡}}\AgdaSpace{}%
\AgdaBound{x1}\<%
\\
\>[2]\AgdaFunction{vampagda-f17}\AgdaSpace{}%
\AgdaBound{x0}\AgdaSpace{}%
\AgdaBound{x1}\AgdaSpace{}%
\AgdaSymbol{=}\AgdaSpace{}%
\AgdaSymbol{(}\AgdaFunction{trans}\AgdaSpace{}%
\AgdaSymbol{(}\AgdaFunction{vampagda-f14}\AgdaSpace{}%
\AgdaBound{x0}\AgdaSpace{}%
\AgdaBound{x1}\AgdaSymbol{)}\AgdaSpace{}%
\AgdaSymbol{(}\AgdaFunction{vampagda-f8}\AgdaSpace{}%
\AgdaBound{x1}\AgdaSymbol{))}\<%
\\
\>[2]\AgdaFunction{vampagda-f21}\AgdaSpace{}%
\AgdaSymbol{:}\AgdaSpace{}%
\AgdaSymbol{∀}\AgdaSpace{}%
\AgdaSymbol{(}\AgdaBound{x0}\AgdaSpace{}%
\AgdaSymbol{:}\AgdaSpace{}%
\AgdaPostulate{V}\AgdaSymbol{)}\AgdaSpace{}%
\AgdaSymbol{→}\AgdaSpace{}%
\AgdaSymbol{(}\AgdaOperator{\AgdaPostulate{\AgdaUnderscore{}+\AgdaUnderscore{}}}\AgdaSpace{}%
\AgdaSymbol{(}\AgdaOperator{\AgdaPostulate{-\AgdaUnderscore{}}}\AgdaSpace{}%
\AgdaSymbol{(}\AgdaOperator{\AgdaPostulate{-\AgdaUnderscore{}}}\AgdaSpace{}%
\AgdaBound{x0}\AgdaSymbol{))}\AgdaSpace{}%
\AgdaPostulate{ze}\AgdaSymbol{)}\AgdaSpace{}%
\AgdaOperator{\AgdaDatatype{≡}}\AgdaSpace{}%
\AgdaBound{x0}\<%
\\
\>[2]\AgdaFunction{vampagda-f21}\AgdaSpace{}%
\AgdaBound{x0}\AgdaSpace{}%
\AgdaSymbol{=}\AgdaSpace{}%
\AgdaSymbol{(}\AgdaFunction{trans}\AgdaSpace{}%
\AgdaSymbol{(}\AgdaFunction{cong}\AgdaSpace{}%
\AgdaSymbol{(λ}\AgdaSpace{}%
\AgdaBound{❌}\AgdaSpace{}%
\AgdaSymbol{→}\AgdaSpace{}%
\AgdaSymbol{(}\AgdaOperator{\AgdaPostulate{\AgdaUnderscore{}+\AgdaUnderscore{}}}\AgdaSpace{}%
\AgdaSymbol{(}\AgdaOperator{\AgdaPostulate{-\AgdaUnderscore{}}}\AgdaSpace{}%
\AgdaSymbol{(}\AgdaOperator{\AgdaPostulate{-\AgdaUnderscore{}}}\AgdaSpace{}%
\AgdaBound{x0}\AgdaSymbol{))}\AgdaSpace{}%
\AgdaBound{❌}\AgdaSymbol{))}\AgdaSpace{}%
\AgdaSymbol{(}\AgdaFunction{vampagda-f9}\AgdaSpace{}%
\AgdaBound{x0}\AgdaSymbol{))}\AgdaSpace{}%
\AgdaSymbol{(}\AgdaFunction{vampagda-f17}\AgdaSpace{}%
\AgdaSymbol{(}\AgdaOperator{\AgdaPostulate{-\AgdaUnderscore{}}}\AgdaSpace{}%
\AgdaBound{x0}\AgdaSymbol{)}\AgdaSpace{}%
\AgdaBound{x0}\AgdaSymbol{))}\<%
\\
\>[2]\AgdaFunction{vampagda-f22}\AgdaSpace{}%
\AgdaSymbol{:}\AgdaSpace{}%
\AgdaSymbol{∀}\AgdaSpace{}%
\AgdaSymbol{(}\AgdaBound{x0}\AgdaSpace{}%
\AgdaSymbol{:}\AgdaSpace{}%
\AgdaPostulate{V}\AgdaSymbol{)}\AgdaSpace{}%
\AgdaSymbol{(}\AgdaBound{x1}\AgdaSpace{}%
\AgdaSymbol{:}\AgdaSpace{}%
\AgdaPostulate{V}\AgdaSymbol{)}\AgdaSpace{}%
\AgdaSymbol{→}\AgdaSpace{}%
\AgdaSymbol{(}\AgdaOperator{\AgdaPostulate{\AgdaUnderscore{}+\AgdaUnderscore{}}}\AgdaSpace{}%
\AgdaBound{x1}\AgdaSpace{}%
\AgdaBound{x0}\AgdaSymbol{)}\AgdaSpace{}%
\AgdaOperator{\AgdaDatatype{≡}}\AgdaSpace{}%
\AgdaSymbol{(}\AgdaOperator{\AgdaPostulate{\AgdaUnderscore{}+\AgdaUnderscore{}}}\AgdaSpace{}%
\AgdaSymbol{(}\AgdaOperator{\AgdaPostulate{-\AgdaUnderscore{}}}\AgdaSpace{}%
\AgdaSymbol{(}\AgdaOperator{\AgdaPostulate{-\AgdaUnderscore{}}}\AgdaSpace{}%
\AgdaBound{x1}\AgdaSymbol{))}\AgdaSpace{}%
\AgdaBound{x0}\AgdaSymbol{)}\<%
\\
\>[2]\AgdaFunction{vampagda-f22}\AgdaSpace{}%
\AgdaBound{x0}\AgdaSpace{}%
\AgdaBound{x1}\AgdaSpace{}%
\AgdaSymbol{=}\AgdaSpace{}%
\AgdaSymbol{(}\AgdaFunction{trans}\AgdaSpace{}%
\AgdaSymbol{(}\AgdaFunction{sym}\AgdaSpace{}%
\AgdaSymbol{(}\AgdaFunction{vampagda-f17}\AgdaSpace{}%
\AgdaSymbol{(}\AgdaOperator{\AgdaPostulate{-\AgdaUnderscore{}}}\AgdaSpace{}%
\AgdaBound{x1}\AgdaSymbol{)}\AgdaSpace{}%
\AgdaSymbol{(}\AgdaOperator{\AgdaPostulate{\AgdaUnderscore{}+\AgdaUnderscore{}}}\AgdaSpace{}%
\AgdaBound{x1}\AgdaSpace{}%
\AgdaBound{x0}\AgdaSymbol{)))}\AgdaSpace{}%
\AgdaSymbol{(}\AgdaFunction{cong}\AgdaSpace{}%
\AgdaSymbol{(λ}\AgdaSpace{}%
\AgdaBound{❌}\AgdaSpace{}%
\AgdaSymbol{→}\AgdaSpace{}%
\AgdaSymbol{(}\AgdaOperator{\AgdaPostulate{\AgdaUnderscore{}+\AgdaUnderscore{}}}\AgdaSpace{}%
\AgdaSymbol{(}\AgdaOperator{\AgdaPostulate{-\AgdaUnderscore{}}}\AgdaSpace{}%
\AgdaSymbol{(}\AgdaOperator{\AgdaPostulate{-\AgdaUnderscore{}}}\AgdaSpace{}%
\AgdaBound{x1}\AgdaSymbol{))}\AgdaSpace{}%
\AgdaBound{❌}\AgdaSymbol{))}\AgdaSpace{}%
\AgdaSymbol{(}\AgdaFunction{vampagda-f17}\AgdaSpace{}%
\AgdaBound{x1}\AgdaSpace{}%
\AgdaBound{x0}\AgdaSymbol{)))}\<%
\\
\>[2]\AgdaFunction{vampagda-f27}\AgdaSpace{}%
\AgdaSymbol{:}\AgdaSpace{}%
\AgdaSymbol{∀}\AgdaSpace{}%
\AgdaSymbol{(}\AgdaBound{x0}\AgdaSpace{}%
\AgdaSymbol{:}\AgdaSpace{}%
\AgdaPostulate{V}\AgdaSymbol{)}\AgdaSpace{}%
\AgdaSymbol{→}\AgdaSpace{}%
\AgdaSymbol{(}\AgdaOperator{\AgdaPostulate{\AgdaUnderscore{}+\AgdaUnderscore{}}}\AgdaSpace{}%
\AgdaBound{x0}\AgdaSpace{}%
\AgdaPostulate{ze}\AgdaSymbol{)}\AgdaSpace{}%
\AgdaOperator{\AgdaDatatype{≡}}\AgdaSpace{}%
\AgdaBound{x0}\<%
\\
\>[2]\AgdaFunction{vampagda-f27}\AgdaSpace{}%
\AgdaBound{x0}\AgdaSpace{}%
\AgdaSymbol{=}\AgdaSpace{}%
\AgdaSymbol{(}\AgdaFunction{trans}\AgdaSpace{}%
\AgdaSymbol{(}\AgdaFunction{vampagda-f22}\AgdaSpace{}%
\AgdaPostulate{ze}\AgdaSpace{}%
\AgdaBound{x0}\AgdaSymbol{)}\AgdaSpace{}%
\AgdaSymbol{(}\AgdaFunction{vampagda-f21}\AgdaSpace{}%
\AgdaBound{x0}\AgdaSymbol{))}\<%
\\
\>[2]\AgdaFunction{vampagda-f56}\AgdaSpace{}%
\AgdaSymbol{:}\AgdaSpace{}%
\AgdaSymbol{∀}\AgdaSpace{}%
\AgdaSymbol{(}\AgdaBound{x0}\AgdaSpace{}%
\AgdaSymbol{:}\AgdaSpace{}%
\AgdaPostulate{V}\AgdaSymbol{)}\AgdaSpace{}%
\AgdaSymbol{→}\AgdaSpace{}%
\AgdaPostulate{ze}\AgdaSpace{}%
\AgdaOperator{\AgdaDatatype{≡}}\AgdaSpace{}%
\AgdaSymbol{(}\AgdaOperator{\AgdaPostulate{\AgdaUnderscore{}+\AgdaUnderscore{}}}\AgdaSpace{}%
\AgdaBound{x0}\AgdaSpace{}%
\AgdaSymbol{(}\AgdaOperator{\AgdaPostulate{-\AgdaUnderscore{}}}\AgdaSpace{}%
\AgdaBound{x0}\AgdaSymbol{))}\<%
\\
\>[2]\AgdaFunction{vampagda-f56}\AgdaSpace{}%
\AgdaBound{x0}\AgdaSpace{}%
\AgdaSymbol{=}\AgdaSpace{}%
\AgdaSymbol{(}\AgdaFunction{trans}\AgdaSpace{}%
\AgdaSymbol{(}\AgdaFunction{vampagda-f9}\AgdaSpace{}%
\AgdaSymbol{(}\AgdaOperator{\AgdaPostulate{-\AgdaUnderscore{}}}\AgdaSpace{}%
\AgdaBound{x0}\AgdaSymbol{))}\AgdaSpace{}%
\AgdaSymbol{(}\AgdaFunction{sym}\AgdaSpace{}%
\AgdaSymbol{(}\AgdaFunction{vampagda-f22}\AgdaSpace{}%
\AgdaSymbol{(}\AgdaOperator{\AgdaPostulate{-\AgdaUnderscore{}}}\AgdaSpace{}%
\AgdaBound{x0}\AgdaSymbol{)}\AgdaSpace{}%
\AgdaBound{x0}\AgdaSymbol{)))}\<%
\\
\>[2]\AgdaFunction{vampagda-f87}\AgdaSpace{}%
\AgdaSymbol{:}\AgdaSpace{}%
\AgdaPostulate{ze}\AgdaSpace{}%
\AgdaOperator{\AgdaDatatype{≡}}\AgdaSpace{}%
\AgdaSymbol{(}\AgdaOperator{\AgdaPostulate{\AgdaUnderscore{}+\AgdaUnderscore{}}}\AgdaSpace{}%
\AgdaBound{u}\AgdaSpace{}%
\AgdaPostulate{ze}\AgdaSymbol{)}\<%
\\
\>[2]\AgdaFunction{vampagda-f87}\AgdaSpace{}%
\AgdaSymbol{=}\AgdaSpace{}%
\AgdaSymbol{(}\AgdaFunction{trans}\AgdaSpace{}%
\AgdaSymbol{(}\AgdaFunction{trans}\AgdaSpace{}%
\AgdaSymbol{(}\AgdaFunction{cong}\AgdaSpace{}%
\AgdaSymbol{(λ}\AgdaSpace{}%
\AgdaBound{❌}\AgdaSpace{}%
\AgdaSymbol{→}\AgdaSpace{}%
\AgdaBound{❌}\AgdaSymbol{)}\AgdaSpace{}%
\AgdaSymbol{(}\AgdaFunction{vampagda-f56}\AgdaSpace{}%
\AgdaBound{v}\AgdaSymbol{))}\AgdaSpace{}%
\AgdaSymbol{(}\AgdaFunction{vampagda-f16}\AgdaSpace{}%
\AgdaSymbol{(}\AgdaOperator{\AgdaPostulate{-\AgdaUnderscore{}}}\AgdaSpace{}%
\AgdaBound{v}\AgdaSymbol{)))}\AgdaSpace{}%
\AgdaSymbol{(}\AgdaFunction{cong}\AgdaSpace{}%
\AgdaSymbol{(λ}\AgdaSpace{}%
\AgdaBound{❌}\AgdaSpace{}%
\AgdaSymbol{→}\AgdaSpace{}%
\AgdaSymbol{(}\AgdaOperator{\AgdaPostulate{\AgdaUnderscore{}+\AgdaUnderscore{}}}\AgdaSpace{}%
\AgdaBound{u}\AgdaSpace{}%
\AgdaBound{❌}\AgdaSymbol{))}\AgdaSpace{}%
\AgdaSymbol{(}\AgdaFunction{sym}\AgdaSpace{}%
\AgdaSymbol{(}\AgdaFunction{vampagda-f56}\AgdaSpace{}%
\AgdaBound{v}\AgdaSymbol{))))}\<%
\\
\>[2]\AgdaFunction{vampagda-f88}\AgdaSpace{}%
\AgdaSymbol{:}\AgdaSpace{}%
\AgdaPostulate{ze}\AgdaSpace{}%
\AgdaOperator{\AgdaDatatype{≡}}\AgdaSpace{}%
\AgdaBound{u}\<%
\\
\>[2]\AgdaFunction{vampagda-f88}\AgdaSpace{}%
\AgdaSymbol{=}\AgdaSpace{}%
\AgdaSymbol{(}\AgdaFunction{trans}\AgdaSpace{}%
\AgdaFunction{vampagda-f87}\AgdaSpace{}%
\AgdaSymbol{(}\AgdaFunction{vampagda-f27}\AgdaSpace{}%
\AgdaBound{u}\AgdaSymbol{))}\<%
\\
\>[2]\AgdaFunction{vampagda-f93}\AgdaSpace{}%
\AgdaSymbol{:}\AgdaSpace{}%
\AgdaBound{u}\AgdaSpace{}%
\AgdaOperator{\AgdaDatatype{≡}}\AgdaSpace{}%
\AgdaPostulate{ze}\<%
\\
\>[2]\AgdaFunction{vampagda-f93}\AgdaSpace{}%
\AgdaSymbol{=}\AgdaSpace{}%
\AgdaSymbol{(}\AgdaFunction{vampagda-f12}\AgdaSpace{}%
\AgdaFunction{vampagda-f88}\AgdaSymbol{)}\<%
\end{code}

\subsection{Invoking \vampire}
We now have an SMT-LIB input that \vampire can attempt.
\vampire, like most ATPs, works best if allowed to run a \emph{portfolio}, trying many different \emph{strategies} until one succeeds.
A strategy is simply a collection of options for \vampire.
The idea is that a single strategy is likely to find a proof quickly or not at all, so it is better to restart afresh regularly with a new strategy than continue to try hard for a long time.
We found that the pre-baked CASC~\cite{casc} portfolio was highly effective on the unseen\footnote{this lends credence to the idea that the portfolio was adequately regularised during construction~\cite{spider-regularisation}} Agda problems.
Portfolio modes also allow making full use of available parallelism by running multiple strategies simultaneously.

However, running the CASC portfolio also exposes us to the full range of possible \vampire inferences, some of which can be very surprising.
We have not implemented \emph{all} of \vampire's inferences for this prototype, but we have done enough such that all the proofs in \cref{sec:running} are found and reconstructed while using the portfolio.
\vampire delivers the following proof for the SMT-LIB input above:
\begin{enumerate}
	\item $\forall x.~0 + x = x$ (\texttt{neutl})
	\item $\forall x.~0 = -x + x$ (\texttt{negl})
	\item $\forall xyz.~(x + y) + z = x + (y + z)$ (\texttt{assoc})
	\item $v = u + v$ (\texttt{e})
	\item $0 \neq u$ (negated goal)
	\item $\forall xy.~-x + (x + y) = 0 + y$ (superposition 3, 2)
	\item $\forall x.~v + x = u + (v + x) $ (superposition 3, 4)
	\item $\forall xy.~-x + (x + y) = y$ (superposition 6, 1)
	\item $\forall x.~(--x) + 0 = x$ (superposition 8, 2)
	\item $\forall xy.~y + x = (--y) + x$ (superposition 8, 8)
	\item $\forall x. x + 0 = x$ (superposition 9, 10)
	\item $\forall x.~0 = x + -x$ (superposition 2, 10)
	\item $0 = u + 0$ (superposition 7, 12)
	\item $0 = u$ (superposition 13, 11)
	\item $\bot$ (resolution 14, 5)
\end{enumerate}
Note that the proof ends in $\bot$, and requires the negated goal as a premise.
\vampire treats equations as symmetric, so superpositions may occur in either direction.

\subsection{Transforming the Proof}
\label{sec:proof-transform}
At this stage we we have a set of definite clauses $\Gamma$, a ground atom $G$, and a classical proof from \vampire purporting to show $\Gamma \vdash G$.
\vampire will in fact show $\Gamma, \lnot G \vdash \bot$.
The input set is already in CNF, so \vampire's preprocessing rules do not apply and the proof of $\Gamma, \lnot G \vdash \bot$ will only involve rules of the superposition calculus given above\footnote{this is not quite true, but the details are not important here: more on this in \cref{sec:edge-cases}}.
Let us specialise \vampire's superposition calculus to operate on Horn premises:
\begin{mathpar}
\inferrule[Resolution]{C \to P \\ P' \wedge D \to \boxed{Q}}{\sigma(C \wedge D \to \boxed{Q})}\and
\inferrule[Factoring]{P \wedge P' \wedge C \to \boxed{Q}}{\sigma(P \wedge C \to \boxed{Q})}\and
\inferrule[Equality Resolution]{s = t \wedge C \to \boxed{P}}{\sigma(C \to \boxed{P})}\and
\inferrule[Superposition Left]{C \to l = r \\ P[l'] \wedge D \to \boxed{Q}}{\sigma(C \wedge P[r] \wedge D \to \boxed{Q})}\and
\inferrule[Superposition Right]{C \to l = r \\ D \to P[l']}{\sigma(C \wedge D \to P[r])}
\end{mathpar}
where $\sigma = \mgu(P, P')$ in \textsc{Resolution} and \textsc{Factoring}, $\mgu(l, l')$ in both \textsc{Superposition} variants, and $\mgu(s, t)$ in \textsc{Equality Resolution}.
Each conclusion is Horn.
It is not too hard to see that these rules are constructively valid, but we will show this definitively by giving a translation to Agda.
Now consider the \fbox{boxed} atoms in the premises of some rules: these atoms could be $\bot$, i.e. when the premise is a goal clause.
When the boxed atom is \emph{ground} (including, but not limited to, $\bot$), each such rule simply duplicates the boxed atom in the conclusion.
This is because substitution on ground atoms does nothing.
\begin{lemma}[agnostic]
\label{lemma:agnostic}
For any inference of the Horn superposition calculus with a premise of the form $C \to \bot$, the conclusion is of the form $D \to \bot$.
Furthermore, a difference instance of the same rule concludes $D \to G$ from $C \to G$ for any ground atom $G$.
\end{lemma}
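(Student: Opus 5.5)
The plan is a case analysis over the five rules of the Horn superposition calculus, checking in each case where a premise of the form $C \to \bot$ can occur and what happens to the head. The key structural fact is that $\bot$ is not an atom: it cannot be unified with anything, cannot contain a subterm $l'$, and is fixed by every substitution. So in every rule a goal-clause premise can only occupy a position whose head is \emph{not} inspected by the rule, i.e.\ a \fbox{boxed} position.

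Concretely, we go rule by rule. In \textsc{Resolution} the left premise $C \to P$ has an atom head $P$ that must unify with $P'$, so it cannot be a goal clause; hence a goal premise must be the right premise $P' \wedge D \to \boxed{Q}$ with $Q = \bot$. \textsc{Factoring} and \textsc{Equality Resolution} have a single premise whose head is boxed and not consulted. In \textsc{Superposition Left} the left premise has head $l = r$, so only the right premise can be a goal, and again its head is boxed. In \textsc{Superposition Right} neither premise can be a goal: the left one has head $l = r$, and the right one has head $P[l']$, which must contain the subterm $l'$, impossible for $\bot$. In each admissible case the conclusion's head is $\sigma(\bot) = \bot$, so the conclusion has the form $D \to \bot$ (with $D$ the $\sigma$-image of the remaining body literals). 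This gives the first claim.

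For the second claim, we fix a ground atom $G$ and take the same inference, keeping the same side premise, the same selected body literals, the same $l, l', P, P', s, t$, and the same $\sigma$, but with the goal premise $C \to \bot$ replaced by $C \to G$. The point to verify is that $\sigma$ is still a most general unifier of the relevant pair. This holds because $\sigma$ is computed only from literals in the body or in the non-goal premise, never from the boxed head, so changing the head from $\bot$ to $G$ does not alter the unification problem. The rule is therefore still applicable with the same $\sigma$, and since $G$ is ground, $\sigma(G) = G$; the conclusion is $D \to G$ with the identical body $D$.

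The main obstacle is the bookkeeping around variable renaming and implicit reordering and symmetry of literals. We must ensure that replacing $\bot$ by a ground $G$ introduces no new variables, which is immediate because $G$ is ground, and that no rule instance could have used $G$ itself as the unified literal, which is excluded since we reuse exactly the original choice of literals. We would state this explicitly so that the \enquote{different instance of the same rule} is well defined, and note that the argument lifts by induction over whole proofs. That lifting is what the next observation in the paper uses.
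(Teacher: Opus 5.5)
Your proposal is correct and follows the paper's own argument. The paper's proof is just a case analysis over the rules. It rests on the same two facts you use: a goal-clause premise can only occupy a boxed head position that the rule never inspects, and substitutions fix ground atoms. Your explicit checks (at most one premise can be a goal clause, and the mgu never depends on the boxed head, so the same $\sigma$ serves when $\bot$ is replaced by $G$) fill in details the paper leaves implicit.
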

\begin{proof}
By case analysis on the rules of the calculus.
\end{proof}
\cref{lemma:agnostic} permits a very simple transformation of the proof.
First, replace the root goal clause $G \to \bot$ with the tautology $G \to G$.
Then, for any inference in the proof that has a goal clause $C \to \bot$ as a premise, replace it with $C \to G$.
This will produce a conclusion of the form $D \to G$.
The final empty goal clause $\to \bot$ in the refutation will be transformed into a clause $\to G$, so we have $\Gamma, (G \to G) \vdash G$, from which we obtain $\Gamma \vdash G$.
In the example above, step 5 is translated to $0 = u \to 0 = u$ and step 15 becomes $0 = u$.
This is a simple case, but in general $\bot$ may be woven through the entire proof.

\subsubsection{On Friedman}
\label{sec:friedman}
This trick is similar to the technique used by Friedman~\cite{friedman} to show that any $\Pi^0_2$ sentence that is a theorem of Peano arithmetic is also a theorem of Heyting arithmetic.
We follow Selinger's presentation~\cite{selinger}.
Friedman defines the $A$-translation of a formula $F$ to be the result of replacing any atomic formula $B$ within $F$ (including $\bot$) with $B \vee A$, provided no free variable of $A$ is bound in $F$: this is shown to preserve validity in Heyting arithmetic.
$\Pi^0_2$ sentences can be reduced to the form $\exists y.~f(x, y) = 0$ for some primitive recursive $f$: let us call this $\phi$.
The $A$-translation of $\phi$ is $\exists y.~(f(x, y) = 0 \vee A)$, which is equivalent to $\phi \vee A$.
Friedman first shows that if $\pa \phi$, then $\ha (\phi \to \bot) \to \bot$.
Then, with $A = \phi$, $A$-translation yields
$$\ha ((\phi \vee \phi) \to \phi) \to \phi$$
and hence $\ha \phi$.
Selinger states that as long as a theory can prove its own axioms in both double-negated and $A$-translated forms, Friedman's argument will work.
Our theory is not Heyting arithmetic, but the user-provided axioms.
However, since for any Horn premise we can prove the double-negated form and the $A$-translation, we know that a proof of $(G \to \bot) \to \bot$ can be transformed into a proof of $G$.
The approach we employ is not precisely the same as Friedman's, but we could certainly say it rhymes.

\subsection{Translating the Proof to Agda}
\label{sec:vampire-to-agda}
We now need to translate any inference of the above Horn superposition calculus into an Agda proof term.
This is neither difficult nor novel: a very similar translation is given in a slightly different context by Burel~\cite{burel},
although we do not have a doubly-negated representation.
However, we sketch the scheme here in order to give the main idea.
For the sake of clarity we present a ground calculus and leave it to the reader to insert the appropriate introduction and instantiation of variables.
Corresponding with the Horn inference rules above:
\begin{mathpar}
\inferrule[Resolution]
{t_1 : C \to P \\ t_2 : P \to D \to Q}
{\left(\lambda cd.~t_2~\left(t_1~c\right)~d\right) : C \to D \to Q}\and
\inferrule[Factoring]{t : P \to P \to C \to Q}
{\left(\lambda pc.~t~p~p~c\right) : P \to C \to Q}\and
\inferrule[Equality Factoring]{t : s = s \to C \to P}
{\left(\lambda c.~t~\mathrm{refl}~c\right) : C \to P}\and
\inferrule[Superposition Right]
{t_1 : C \to l = r \\ t_2 : D \to P[l]}
{\left(\lambda cd.~\mathrm{rw}~\left(t_1~c\right)~\left(t_2~d\right)\right): C \to D \to P[r]}\and
\inferrule[Superposition Left]
{t_1 : C \to l = r \\ t_2 : P[l] \to D \to Q}
{\left(\lambda cpd.~t_2~\left(\mathrm{rw}~(\mathrm{sym}~(t_1~c))~p\right)~d\right): C \to P[r] \to D \to Q}
\end{mathpar}
where $\mathrm{rw} : l = r \to P[l] \to P[r]$ is an appropriate Agda ``rewrite'' term for any $P$, $l$ and $r$, and $\mathrm{sym} : l = r \to r = l$ is a lemma showing symmetry of equality.

\subsubsection{Implementation via Prolog}
Producing precise proof terms from ATP proofs is usually done by either modifying the ATP to output them directly~\cite{vampukti}, or by invoking a different, certificate-producing, ATP on each proof step, producing a certificate for that step~\cite{gdv}.
Neither option was attractive for us.
Modifying \vampire is a great deal of work, causing a (small but present) performance hit and producing a maintenance burden.
Invoking a certifying ATP on proof steps is also considerable integration work, and occasionally the certifying ATP times out in hard-to-predict ways.

Instead, we exploit the fact that \vampire produces TSTP~\cite{tptp} proofs, which are valid Prolog~\cite{iso-prolog} with some special operators defined.
We implement the TSTP-Agda translation with a Prolog script.
The script first reads the proof into an internal database using \texttt{read/1}, recording the name (e.g. \texttt{superposition}) of inferences and their premises.
At this stage, we ensure that clauses are Horn and replace any $\bot$ with $G$, as above.
Then, for each step in the proof, the script performs backtracking \emph{search} for a valid e.g. superposition step that produces the correct conclusion given the premises, building an Agda proof term as it goes.
The script handles \vampire's idiosyncrasies, including variable renaming and implicitly treating equality as symmetric.
On success, the proof term for the step is complete.

\textbf{We highly recommend this method for similar tasks}.
Many aspects which are painful with other methods, chiefly variable renaming, are non-problems here as Prolog already has the correct semantics.
Backtracking search for valid inferences can be implemented concisely and declaratively: the entire Prolog script, including loading, reconstruction, printing, and several extensions, is achieved in less than 500 lines of inexpert code.
In principle this search could result in combinatorial blowup, but this does not seem to be a problem in practice and was not seen during testing on the entire supported fragment of the TPTP~\cite{tptp} benchmark set.
We think this improved behaviour is because the Prolog script \emph{knows it is looking for a single inference}, whereas an external ATP does not.

\subsubsection{Edge Cases and Other Inferences}
\label{sec:edge-cases}
There is one final mismatch between \vampire's logic and Agda's: empty domains.
In classical first-order logic, all domains are considered implicitly non-empty, so that $\forall x.~P(x)$ and $\forall x.~(P(x) \implies Q)$ entail $Q$.
However, this does not always hold in Agda as the domain of $x$ may have \emph{no} inhabitants.
When this situation occurs, the Prolog script looks around for bound variables or constants of the appropriate sort to show that the domain is inhabited.
In the rare event that this still fails, a hole is left for the user to fill.

\vampire also has more inferences than we admitted earlier: at the time of writing it has around 200 (!).
This was not a concern in practice: many relate to inputs like arithmetic outside of the fragment we consider here; some relate to translation into clause normal form, unnecessary here; and some are special cases of the core calculus and can be dealt with as such.
We must only handle the remainder, which turn out to be straightforward.
\vampire likes to introduce and eliminate definitions for terms as it sees fit, and we must go along with it.
Introducing definitions is easy, for they simply become Agda definitions.
Eliminating definitions is more involved but is essentially equational reasoning of the same flavour as the superposition rule.
\vampire also knows about the properties of inductively-defined datatypes, and will e.g. deduce $x = y$ and $l = k$ from $x :: l = y :: k$.
This kind of reasoning turns out to be mechanical and easy to translate.
In the end, we handle 21 named inferences.

Some inferences will prove difficult to handle fully in this context, including those which escape the Horn fragment, introduce new symbols without an obvious definition, or perform a reduction to SAT~\cite{uses-of-sat}.
If desired, it is possible to disable \vampire's more flamboyant options in exchange for reduced proving power: we would recommend the command line flag \verb|--forced_options| \verb|av=off:ins=0:gsp=off:gs=off:tha=off:ep=off:gtg=off|.

\begin{code}[hide]%
\>[0]\AgdaComment{--module\ papers.ITP26.running-example\ where}\<%
\\
\\%
\>[0]\AgdaKeyword{open}\AgdaSpace{}%
\AgdaKeyword{import}\AgdaSpace{}%
\AgdaModule{Relation.Binary.PropositionalEquality}\<%
\\
\>[0]\AgdaKeyword{open}\AgdaSpace{}%
\AgdaKeyword{import}\AgdaSpace{}%
\AgdaModule{Data.Nat}\AgdaSpace{}%
\AgdaSymbol{as}\AgdaSpace{}%
\AgdaModule{ℕ}\AgdaSpace{}%
\AgdaKeyword{using}\AgdaSpace{}%
\AgdaSymbol{(}\AgdaDatatype{ℕ}\AgdaSymbol{;}\AgdaSpace{}%
\AgdaInductiveConstructor{zero}\AgdaSymbol{;}\AgdaSpace{}%
\AgdaInductiveConstructor{suc}\AgdaSymbol{)}\<%
\\
\>[0]\AgdaKeyword{import}\AgdaSpace{}%
\AgdaModule{Algebra.Structures}\AgdaSpace{}%
\AgdaSymbol{as}\AgdaSpace{}%
\AgdaModule{AlgebraStructures}\<%
\\
\>[0]\AgdaKeyword{import}\AgdaSpace{}%
\AgdaModule{Algebra.Definitions}\AgdaSpace{}%
\AgdaSymbol{as}\AgdaSpace{}%
\AgdaModule{AlgebraDefinitions}\<%
\\
\>[0]\AgdaKeyword{module}\AgdaSpace{}%
\AgdaModule{\AgdaUnderscore{}}\AgdaSpace{}%
\AgdaKeyword{where}\<%
\end{code}

\section{\label{sec:running}Case Study}

To illustrate the practicality of the proposed Agda–Vampire integration 
we present a case study drawn from a real‑world formalisation effort.
The problem was selected because of:
\begin{description}
  \item[practical relevance] --- it arose during one of our own formalisation
    projects;
  \item[non‑triviality] --- the built‑in proof search was unable to discharge
    neither of the goal, and a seasoned Agda developer required roughly two days
    to complete the proof (available in supplementary materials as
    \texttt{ICplx.agda});
  \item[structure amenable to automation] --- the difficulty lies primarily in
    mechanically assembling the constituent lemmas, whereas the underlying
    mathematical insight is comparatively straightforward.  Consequently, the
    primary aim is not to discover novel proof structure but to automate the
    tedious goal‑discharge phase.
\end{description}
When formalising Fast Fourier Transform (FFT) algorithms and proving their
correctness, we wish to abstract away from the classical presentation that
relies on complex numbers.  Instead we aim to generalise the algorithm to any
commutative ring equipped with roots of unity.  As a sanity check we must
verify that our chosen axioms behave correctly for the familiar case of complex
numbers and their roots of unity. To this end we first introduce an axiomatic
theory of the reals together with elementary trigonometry.  From these
foundations we derive the conventional representation of complex numbers as
ordered pairs of reals and define roots of unity via complex exponentiation.
Finally we employ our system to confirm that this encoding satisfies the
required properties.

We begin by declaring the interface that captures the algebraic operations and
their associated laws.  The interface consists of a carrier set \AF{F}
together with addition, negation, multiplication, distinguished constants
\AF{𝟘ᶠ} and \AF{𝟙ᶠ}, and a primitive for roots of unity, \AF{-ωᶠ}.  The latter
denotes the primitive $n$-th root of
unity raised to the $k$-th power.  Because roots of unity are traditionally
defined only for $n>0$, we adopt the convention where \AF{-ωᶠ} $n$ denotes
$(n+1)$-th root of unity.  In other words we (morally) interpret \AF{-ωᶠ} \AB{n}
\AB{k} as $\exp ({-i2πk} / (1 + n))$ thereby avoiding the degenerate case
$n=0$.
\begin{code}[hide]%
\>[0]\AgdaKeyword{module}\AgdaSpace{}%
\AgdaModule{Interface}\AgdaSpace{}%
\AgdaKeyword{where}\<%
\\
\>[0][@{}l@{\AgdaIndent{0}}]%
\>[2]\AgdaKeyword{postulate}\<%
\end{code}
\begin{mathpar}
\codeblock{\begin{code}%
\>[2][@{}l@{\AgdaIndent{1}}]%
\>[4]\AgdaPostulate{𝔽}%
\>[10]\AgdaSymbol{:}\AgdaSpace{}%
\AgdaPrimitive{Set}\<%
\\
\>[4]\AgdaOperator{\AgdaPostulate{\AgdaUnderscore{}+ᶠ\AgdaUnderscore{}}}%
\>[10]\AgdaSymbol{:}\AgdaSpace{}%
\AgdaPostulate{𝔽}\AgdaSpace{}%
\AgdaSymbol{→}\AgdaSpace{}%
\AgdaPostulate{𝔽}\AgdaSpace{}%
\AgdaSymbol{→}\AgdaSpace{}%
\AgdaPostulate{𝔽}\<%
\\
\>[4]\AgdaOperator{\AgdaPostulate{-ᶠ\AgdaUnderscore{}}}%
\>[10]\AgdaSymbol{:}\AgdaSpace{}%
\AgdaPostulate{𝔽}\AgdaSpace{}%
\AgdaSymbol{→}\AgdaSpace{}%
\AgdaPostulate{𝔽}\<%
\\
\>[4]\AgdaOperator{\AgdaPostulate{\AgdaUnderscore{}*ᶠ\AgdaUnderscore{}}}%
\>[10]\AgdaSymbol{:}\AgdaSpace{}%
\AgdaPostulate{𝔽}\AgdaSpace{}%
\AgdaSymbol{→}\AgdaSpace{}%
\AgdaPostulate{𝔽}\AgdaSpace{}%
\AgdaSymbol{→}\AgdaSpace{}%
\AgdaPostulate{𝔽}\<%
\\
\>[4]\AgdaPostulate{𝟘ᶠ}%
\>[10]\AgdaSymbol{:}\AgdaSpace{}%
\AgdaPostulate{𝔽}\<%
\\
\>[4]\AgdaPostulate{𝟙ᶠ}%
\>[10]\AgdaSymbol{:}\AgdaSpace{}%
\AgdaPostulate{𝔽}\<%
\\
\>[4]\AgdaPostulate{-ωᶠ}%
\>[10]\AgdaSymbol{:}\AgdaSpace{}%
\AgdaSymbol{(}\AgdaBound{n}\AgdaSpace{}%
\AgdaSymbol{:}\AgdaSpace{}%
\AgdaDatatype{ℕ}\AgdaSymbol{)}\AgdaSpace{}%
\AgdaSymbol{→}\AgdaSpace{}%
\AgdaSymbol{(}\AgdaBound{k}\AgdaSpace{}%
\AgdaSymbol{:}\AgdaSpace{}%
\AgdaDatatype{ℕ}\AgdaSymbol{)}\AgdaSpace{}%
\AgdaSymbol{→}\AgdaSpace{}%
\AgdaPostulate{𝔽}\<%
\end{code}
\begin{code}[hide]%
\>[2]\AgdaKeyword{infix}%
\>[9]\AgdaNumber{8}\AgdaSpace{}%
\AgdaOperator{\AgdaPostulate{-ᶠ\AgdaUnderscore{}}}\<%
\\
\>[2]\AgdaKeyword{infixl}\AgdaSpace{}%
\AgdaNumber{7}\AgdaSpace{}%
\AgdaOperator{\AgdaPostulate{\AgdaUnderscore{}*ᶠ\AgdaUnderscore{}}}\<%
\\
\>[2]\AgdaKeyword{infixl}\AgdaSpace{}%
\AgdaNumber{6}\AgdaSpace{}%
\AgdaOperator{\AgdaPostulate{\AgdaUnderscore{}+ᶠ\AgdaUnderscore{}}}\<%
\\
\>[2]\AgdaKeyword{open}\AgdaSpace{}%
\AgdaModule{AlgebraDefinitions}\AgdaSpace{}%
\AgdaSymbol{\{}\AgdaArgument{A}\AgdaSpace{}%
\AgdaSymbol{=}\AgdaSpace{}%
\AgdaPostulate{𝔽}\AgdaSymbol{\}}\AgdaSpace{}%
\AgdaOperator{\AgdaDatatype{\AgdaUnderscore{}≡\AgdaUnderscore{}}}\<%
\\
\>[2]\AgdaKeyword{variable}\<%
\\
\>[2][@{}l@{\AgdaIndent{0}}]%
\>[4]\AgdaGeneralizable{x}\AgdaSpace{}%
\AgdaGeneralizable{y}\AgdaSpace{}%
\AgdaGeneralizable{z}\AgdaSpace{}%
\AgdaSymbol{:}\AgdaSpace{}%
\AgdaDatatype{ℕ}\<%
\\
\>[2]\AgdaKeyword{infixl}\AgdaSpace{}%
\AgdaNumber{6}\AgdaSpace{}%
\AgdaOperator{\AgdaFunction{\AgdaUnderscore{}+ⁿ\AgdaUnderscore{}}}\<%
\\
\>[2]\AgdaKeyword{infixl}\AgdaSpace{}%
\AgdaNumber{7}\AgdaSpace{}%
\AgdaOperator{\AgdaFunction{\AgdaUnderscore{}*ⁿ\AgdaUnderscore{}}}\<%
\\
\>[2]\AgdaOperator{\AgdaFunction{\AgdaUnderscore{}+ⁿ\AgdaUnderscore{}}}\AgdaSpace{}%
\AgdaSymbol{=}\AgdaSpace{}%
\AgdaOperator{\AgdaPrimitive{ℕ.\AgdaUnderscore{}+\AgdaUnderscore{}}}\<%
\\
\>[2]\AgdaOperator{\AgdaFunction{\AgdaUnderscore{}*ⁿ\AgdaUnderscore{}}}\AgdaSpace{}%
\AgdaSymbol{=}\AgdaSpace{}%
\AgdaOperator{\AgdaPrimitive{ℕ.\AgdaUnderscore{}*\AgdaUnderscore{}}}\<%
\\
\>[2]\AgdaKeyword{postulate}\<%
\end{code}
}
\and
\codeblock{\begin{code}%
\>[2][@{}l@{\AgdaIndent{1}}]%
\>[4]\AgdaPostulate{+ᶠ-assoc}%
\>[16]\AgdaSymbol{:}\AgdaSpace{}%
\AgdaFunction{Associative}\AgdaSpace{}%
\AgdaOperator{\AgdaPostulate{\AgdaUnderscore{}+ᶠ\AgdaUnderscore{}}}\<%
\\
\>[4]\AgdaPostulate{+ᶠ-comm}%
\>[16]\AgdaSymbol{:}\AgdaSpace{}%
\AgdaFunction{Commutative}\AgdaSpace{}%
\AgdaOperator{\AgdaPostulate{\AgdaUnderscore{}+ᶠ\AgdaUnderscore{}}}\<%
\\
\>[4]\AgdaPostulate{+ᶠ-idl}%
\>[16]\AgdaSymbol{:}\AgdaSpace{}%
\AgdaFunction{LeftIdentity}\AgdaSpace{}%
\AgdaPostulate{𝟘ᶠ}\AgdaSpace{}%
\AgdaOperator{\AgdaPostulate{\AgdaUnderscore{}+ᶠ\AgdaUnderscore{}}}\<%
\\
\>[4]\AgdaPostulate{+ᶠ-inv}%
\>[16]\AgdaSymbol{:}\AgdaSpace{}%
\AgdaFunction{LeftInverse}\AgdaSpace{}%
\AgdaPostulate{𝟘ᶠ}\AgdaSpace{}%
\AgdaOperator{\AgdaPostulate{-ᶠ\AgdaUnderscore{}}}\AgdaSpace{}%
\AgdaOperator{\AgdaPostulate{\AgdaUnderscore{}+ᶠ\AgdaUnderscore{}}}\<%
\\
\>[4]\AgdaPostulate{*ᶠ-assoc}%
\>[16]\AgdaSymbol{:}\AgdaSpace{}%
\AgdaFunction{Associative}\AgdaSpace{}%
\AgdaOperator{\AgdaPostulate{\AgdaUnderscore{}*ᶠ\AgdaUnderscore{}}}\<%
\\
\>[4]\AgdaPostulate{*ᶠ-comm}%
\>[16]\AgdaSymbol{:}\AgdaSpace{}%
\AgdaFunction{Commutative}\AgdaSpace{}%
\AgdaOperator{\AgdaPostulate{\AgdaUnderscore{}*ᶠ\AgdaUnderscore{}}}\<%
\\
\>[4]\AgdaPostulate{*ᶠ-idl}%
\>[16]\AgdaSymbol{:}\AgdaSpace{}%
\AgdaFunction{LeftIdentity}\AgdaSpace{}%
\AgdaPostulate{𝟙ᶠ}\AgdaSpace{}%
\AgdaOperator{\AgdaPostulate{\AgdaUnderscore{}*ᶠ\AgdaUnderscore{}}}\<%
\\
\>[4]\AgdaPostulate{*ᶠ-+ᶠ-dist}%
\>[16]\AgdaSymbol{:}\AgdaSpace{}%
\AgdaOperator{\AgdaPostulate{\AgdaUnderscore{}*ᶠ\AgdaUnderscore{}}}\AgdaSpace{}%
\AgdaOperator{\AgdaFunction{DistributesOverˡ}}\AgdaSpace{}%
\AgdaOperator{\AgdaPostulate{\AgdaUnderscore{}+ᶠ\AgdaUnderscore{}}}\<%
\end{code}
}
\end{mathpar}
The right-hand side records the algebraic laws required for \AF{𝔽} to
constitute a commutative ring.  These names used in types are defined
in Agda's standard library and expand to the expected formulations.
For instance \AF{Commutative} \AF{\_+ᶠ\_} expands to: ∀ \AB{x} \AB{y} →
\AB{x} \AF{+ᶠ} \AB{y} \AF{≡} \AB{y} \AF{+ᶠ} \AB{x}.
We now postulate the properties of \AF{-ωᶠ} that stem from the algebra
of exponents.  Interpreting $z = e^{-2\pi}$, the intended equalities are:
$$
  z^{0/x} = 1\qquad
  z^{yx/x} = 1\qquad
  z^{xz/xy} = z^{z/y}\qquad
  z^{(y + z)/x} = z^{y/x}z^{z/x}
$$
Because of our encoding, these statements must be expressed in a slightly more
involved form, which we capture in the following Agda types. These lemmas
constitute the logical backbone of the subsequent development.
\begin{code}%
\>[4]\AgdaPostulate{ω-x-0}%
\>[13]\AgdaSymbol{:}%
\>[16]\AgdaPostulate{-ωᶠ}\AgdaSpace{}%
\AgdaGeneralizable{x}\AgdaSpace{}%
\AgdaNumber{0}\AgdaSpace{}%
\AgdaOperator{\AgdaDatatype{≡}}\AgdaSpace{}%
\AgdaPostulate{𝟙ᶠ}\<%
\\
\>[4]\AgdaPostulate{ω-x-yx}%
\>[13]\AgdaSymbol{:}%
\>[16]\AgdaPostulate{-ωᶠ}\AgdaSpace{}%
\AgdaGeneralizable{x}\AgdaSpace{}%
\AgdaSymbol{((}\AgdaNumber{1}\AgdaSpace{}%
\AgdaOperator{\AgdaFunction{+ⁿ}}\AgdaSpace{}%
\AgdaGeneralizable{x}\AgdaSymbol{)}\AgdaSpace{}%
\AgdaOperator{\AgdaFunction{*ⁿ}}\AgdaSpace{}%
\AgdaGeneralizable{y}\AgdaSymbol{)}\AgdaSpace{}%
\AgdaOperator{\AgdaDatatype{≡}}\AgdaSpace{}%
\AgdaPostulate{𝟙ᶠ}\<%
\\
\>[4]\AgdaPostulate{ω-xy-xz}%
\>[13]\AgdaSymbol{:}%
\>[16]\AgdaPostulate{-ωᶠ}\AgdaSpace{}%
\AgdaSymbol{(}\AgdaGeneralizable{x}\AgdaSpace{}%
\AgdaOperator{\AgdaFunction{+ⁿ}}\AgdaSpace{}%
\AgdaGeneralizable{y}\AgdaSpace{}%
\AgdaOperator{\AgdaFunction{+ⁿ}}\AgdaSpace{}%
\AgdaSymbol{(}\AgdaGeneralizable{x}\AgdaSpace{}%
\AgdaOperator{\AgdaFunction{*ⁿ}}\AgdaSpace{}%
\AgdaGeneralizable{y}\AgdaSymbol{))}\AgdaSpace{}%
\AgdaSymbol{((}\AgdaNumber{1}\AgdaSpace{}%
\AgdaOperator{\AgdaFunction{+ⁿ}}\AgdaSpace{}%
\AgdaGeneralizable{x}\AgdaSymbol{)}\AgdaSpace{}%
\AgdaOperator{\AgdaFunction{*ⁿ}}\AgdaSpace{}%
\AgdaGeneralizable{z}\AgdaSymbol{)}\AgdaSpace{}%
\AgdaOperator{\AgdaDatatype{≡}}\AgdaSpace{}%
\AgdaPostulate{-ωᶠ}\AgdaSpace{}%
\AgdaGeneralizable{y}\AgdaSpace{}%
\AgdaGeneralizable{z}\<%
\\
\>[4]\AgdaPostulate{ω-x-y+z}%
\>[13]\AgdaSymbol{:}%
\>[16]\AgdaPostulate{-ωᶠ}\AgdaSpace{}%
\AgdaGeneralizable{x}\AgdaSpace{}%
\AgdaSymbol{(}\AgdaGeneralizable{y}\AgdaSpace{}%
\AgdaOperator{\AgdaFunction{+ⁿ}}\AgdaSpace{}%
\AgdaGeneralizable{z}\AgdaSymbol{)}\AgdaSpace{}%
\AgdaOperator{\AgdaDatatype{≡}}\AgdaSpace{}%
\AgdaPostulate{-ωᶠ}\AgdaSpace{}%
\AgdaGeneralizable{x}\AgdaSpace{}%
\AgdaGeneralizable{y}\AgdaSpace{}%
\AgdaOperator{\AgdaPostulate{*ᶠ}}\AgdaSpace{}%
\AgdaPostulate{-ωᶠ}\AgdaSpace{}%
\AgdaGeneralizable{x}\AgdaSpace{}%
\AgdaGeneralizable{z}\<%
\end{code}
Next, on the left we introduce an axiomatic theory of the real numbers.  The signature
includes the usual arithmetic operations, a primitive embedding of natural
numbers \AF{ι}, a unary minus, an multiplicative ``inverse'' operation \AF{inv}
which should be thought of as $\AF{inv}\ x = 1/(1 + x)$, and the elementary
trigonometric functions \AF{sin} and \AF{cos}.  In the middle we we define derived
operations: binary subtraction through unary minus, ``division'' through \AF{inv}
and the notions of zero and one through \AF{ι}.  After that we define properties of
injection \AF{ι}. On the right we require reals to be a commutative ring.
\begin{code}[hide]%
\>[0]\AgdaKeyword{postulate}\<%
\end{code}
\begin{mathpar}
\codeblock{\begin{code}%
\>[0][@{}l@{\AgdaIndent{1}}]%
\>[2]\AgdaPostulate{ℝ}%
\>[9]\AgdaSymbol{:}\AgdaSpace{}%
\AgdaPrimitive{Set}\<%
\\
\>[2]\AgdaPostulate{π}%
\>[9]\AgdaSymbol{:}\AgdaSpace{}%
\AgdaPostulate{ℝ}\<%
\\
\>[2]\AgdaOperator{\AgdaPostulate{\AgdaUnderscore{}+\AgdaUnderscore{}}}%
\>[9]\AgdaSymbol{:}\AgdaSpace{}%
\AgdaPostulate{ℝ}\AgdaSpace{}%
\AgdaSymbol{→}\AgdaSpace{}%
\AgdaPostulate{ℝ}\AgdaSpace{}%
\AgdaSymbol{→}\AgdaSpace{}%
\AgdaPostulate{ℝ}\<%
\\
\>[2]\AgdaOperator{\AgdaPostulate{\AgdaUnderscore{}*\AgdaUnderscore{}}}%
\>[9]\AgdaSymbol{:}\AgdaSpace{}%
\AgdaPostulate{ℝ}\AgdaSpace{}%
\AgdaSymbol{→}\AgdaSpace{}%
\AgdaPostulate{ℝ}\AgdaSpace{}%
\AgdaSymbol{→}\AgdaSpace{}%
\AgdaPostulate{ℝ}\<%
\\
\>[2]\AgdaPostulate{ι}%
\>[9]\AgdaSymbol{:}\AgdaSpace{}%
\AgdaDatatype{ℕ}\AgdaSpace{}%
\AgdaSymbol{→}\AgdaSpace{}%
\AgdaPostulate{ℝ}\<%
\\
\>[2]\AgdaOperator{\AgdaPostulate{-\AgdaUnderscore{}}}%
\>[9]\AgdaSymbol{:}\AgdaSpace{}%
\AgdaPostulate{ℝ}\AgdaSpace{}%
\AgdaSymbol{→}\AgdaSpace{}%
\AgdaPostulate{ℝ}\<%
\\
\>[2]\AgdaPostulate{inv}%
\>[9]\AgdaSymbol{:}\AgdaSpace{}%
\AgdaDatatype{ℕ}\AgdaSpace{}%
\AgdaSymbol{→}\AgdaSpace{}%
\AgdaPostulate{ℝ}\<%
\\
\>[2]\AgdaPostulate{sin}%
\>[9]\AgdaSymbol{:}\AgdaSpace{}%
\AgdaPostulate{ℝ}\AgdaSpace{}%
\AgdaSymbol{→}\AgdaSpace{}%
\AgdaPostulate{ℝ}\<%
\\
\>[2]\AgdaPostulate{cos}%
\>[9]\AgdaSymbol{:}\AgdaSpace{}%
\AgdaPostulate{ℝ}\AgdaSpace{}%
\AgdaSymbol{→}\AgdaSpace{}%
\AgdaPostulate{ℝ}\<%
\end{code}
\begin{code}[hide]%
\>[0]\AgdaKeyword{infixl}\AgdaSpace{}%
\AgdaNumber{6}\AgdaSpace{}%
\AgdaOperator{\AgdaFunction{\AgdaUnderscore{}-\AgdaUnderscore{}}}\<%
\\
\>[0]\AgdaKeyword{infixl}\AgdaSpace{}%
\AgdaNumber{6}\AgdaSpace{}%
\AgdaOperator{\AgdaPostulate{\AgdaUnderscore{}+\AgdaUnderscore{}}}\<%
\\
\>[0]\AgdaKeyword{infixl}\AgdaSpace{}%
\AgdaNumber{6}\AgdaSpace{}%
\AgdaOperator{\AgdaPostulate{-\AgdaUnderscore{}}}\<%
\\
\>[0]\AgdaKeyword{infixl}\AgdaSpace{}%
\AgdaNumber{10}\AgdaSpace{}%
\AgdaOperator{\AgdaPostulate{\AgdaUnderscore{}*\AgdaUnderscore{}}}\<%
\\
\>[0]\AgdaKeyword{infixl}\AgdaSpace{}%
\AgdaNumber{10}\AgdaSpace{}%
\AgdaOperator{\AgdaFunction{\AgdaUnderscore{}/\AgdaUnderscore{}}}\<%
\\
\>[0]\AgdaKeyword{open}\AgdaSpace{}%
\AgdaModule{AlgebraDefinitions}\AgdaSpace{}%
\AgdaSymbol{\{}\AgdaArgument{A}\AgdaSpace{}%
\AgdaSymbol{=}\AgdaSpace{}%
\AgdaPostulate{ℝ}\AgdaSymbol{\}}\AgdaSpace{}%
\AgdaOperator{\AgdaDatatype{\AgdaUnderscore{}≡\AgdaUnderscore{}}}\<%
\\
\\%
\>[0]\AgdaKeyword{variable}\<%
\\
\>[0][@{}l@{\AgdaIndent{0}}]%
\>[2]\AgdaGeneralizable{x}\AgdaSpace{}%
\AgdaGeneralizable{y}\AgdaSpace{}%
\AgdaGeneralizable{z}\AgdaSpace{}%
\AgdaSymbol{:}\AgdaSpace{}%
\AgdaPostulate{ℝ}\<%
\\
\>[2]\AgdaGeneralizable{m}\AgdaSpace{}%
\AgdaGeneralizable{n}\AgdaSpace{}%
\AgdaGeneralizable{k}\AgdaSpace{}%
\AgdaSymbol{:}\AgdaSpace{}%
\AgdaDatatype{ℕ}\<%
\\
\\%
\>[0]\AgdaKeyword{infixl}\AgdaSpace{}%
\AgdaNumber{6}\AgdaSpace{}%
\AgdaOperator{\AgdaFunction{\AgdaUnderscore{}+ⁿ\AgdaUnderscore{}}}\<%
\\
\>[0]\AgdaKeyword{infixl}\AgdaSpace{}%
\AgdaNumber{7}\AgdaSpace{}%
\AgdaOperator{\AgdaFunction{\AgdaUnderscore{}*ⁿ\AgdaUnderscore{}}}\<%
\\
\>[0]\AgdaOperator{\AgdaFunction{\AgdaUnderscore{}+ⁿ\AgdaUnderscore{}}}\AgdaSpace{}%
\AgdaSymbol{=}\AgdaSpace{}%
\AgdaOperator{\AgdaPrimitive{ℕ.\AgdaUnderscore{}+\AgdaUnderscore{}}}\<%
\\
\>[0]\AgdaOperator{\AgdaFunction{\AgdaUnderscore{}*ⁿ\AgdaUnderscore{}}}\AgdaSpace{}%
\AgdaSymbol{=}\AgdaSpace{}%
\AgdaOperator{\AgdaPrimitive{ℕ.\AgdaUnderscore{}*\AgdaUnderscore{}}}\<%
\\
\>[0]\AgdaOperator{\AgdaFunction{\AgdaUnderscore{}-\AgdaUnderscore{}}}\AgdaSpace{}%
\AgdaSymbol{:}\AgdaSpace{}%
\AgdaPostulate{ℝ}\AgdaSpace{}%
\AgdaSymbol{→}\AgdaSpace{}%
\AgdaPostulate{ℝ}\AgdaSpace{}%
\AgdaSymbol{→}\AgdaSpace{}%
\AgdaPostulate{ℝ}\<%
\\
\>[0]\AgdaOperator{\AgdaFunction{\AgdaUnderscore{}/\AgdaUnderscore{}}}\AgdaSpace{}%
\AgdaSymbol{:}\AgdaSpace{}%
\AgdaPostulate{ℝ}\AgdaSpace{}%
\AgdaSymbol{→}\AgdaSpace{}%
\AgdaDatatype{ℕ}\AgdaSpace{}%
\AgdaSymbol{→}\AgdaSpace{}%
\AgdaPostulate{ℝ}\<%
\end{code}
}
\codeblock{
\begin{AgdaSuppressSpace}
\begin{code}%
\>[0]\AgdaBound{a}\AgdaSpace{}%
\AgdaOperator{\AgdaFunction{-}}\AgdaSpace{}%
\AgdaBound{b}\AgdaSpace{}%
\AgdaSymbol{=}\AgdaSpace{}%
\AgdaBound{a}\AgdaSpace{}%
\AgdaOperator{\AgdaPostulate{+}}\AgdaSpace{}%
\AgdaSymbol{(}\AgdaOperator{\AgdaPostulate{-}}\AgdaSpace{}%
\AgdaBound{b}\AgdaSymbol{)}\<%
\\
\>[0]\AgdaFunction{𝟘}\AgdaSpace{}%
\AgdaSymbol{=}\AgdaSpace{}%
\AgdaPostulate{ι}\AgdaSpace{}%
\AgdaNumber{0}\<%
\\
\>[0]\AgdaFunction{𝟙}\AgdaSpace{}%
\AgdaSymbol{=}\AgdaSpace{}%
\AgdaPostulate{ι}\AgdaSpace{}%
\AgdaNumber{1}\<%
\\
\>[0]\AgdaBound{a}\AgdaSpace{}%
\AgdaOperator{\AgdaFunction{/}}\AgdaSpace{}%
\AgdaBound{n}\AgdaSpace{}%
\AgdaSymbol{=}\AgdaSpace{}%
\AgdaBound{a}\AgdaSpace{}%
\AgdaOperator{\AgdaPostulate{*}}\AgdaSpace{}%
\AgdaPostulate{inv}\AgdaSpace{}%
\AgdaBound{n}\<%
\end{code}
\begin{code}[hide]%
\>[0]\AgdaKeyword{postulate}\<%
\end{code}
\begin{code}%
\>[0][@{}l@{\AgdaIndent{1}}]%
\>[2]\AgdaPostulate{inv-*}\AgdaSpace{}%
\AgdaSymbol{:}\AgdaSpace{}%
\AgdaPostulate{ι}\AgdaSpace{}%
\AgdaSymbol{(}\AgdaNumber{1}\AgdaSpace{}%
\AgdaOperator{\AgdaFunction{+ⁿ}}\AgdaSpace{}%
\AgdaGeneralizable{n}\AgdaSymbol{)}\AgdaSpace{}%
\AgdaOperator{\AgdaPostulate{*}}\AgdaSpace{}%
\AgdaPostulate{inv}\AgdaSpace{}%
\AgdaGeneralizable{n}\AgdaSpace{}%
\AgdaOperator{\AgdaDatatype{≡}}\AgdaSpace{}%
\AgdaFunction{𝟙}\<%
\\
\>[2]\AgdaPostulate{ι-inj}\AgdaSpace{}%
\AgdaSymbol{:}\AgdaSpace{}%
\AgdaPostulate{ι}\AgdaSpace{}%
\AgdaGeneralizable{n}\AgdaSpace{}%
\AgdaOperator{\AgdaDatatype{≡}}\AgdaSpace{}%
\AgdaPostulate{ι}\AgdaSpace{}%
\AgdaGeneralizable{m}\AgdaSpace{}%
\AgdaSymbol{→}\AgdaSpace{}%
\AgdaGeneralizable{m}\AgdaSpace{}%
\AgdaOperator{\AgdaDatatype{≡}}\AgdaSpace{}%
\AgdaGeneralizable{n}\<%
\\
\>[2]\AgdaPostulate{ι-+}\AgdaSpace{}%
\AgdaSymbol{:}\AgdaSpace{}%
\AgdaPostulate{ι}\AgdaSpace{}%
\AgdaSymbol{(}\AgdaGeneralizable{m}\AgdaSpace{}%
\AgdaOperator{\AgdaFunction{+ⁿ}}\AgdaSpace{}%
\AgdaGeneralizable{n}\AgdaSymbol{)}\AgdaSpace{}%
\AgdaOperator{\AgdaDatatype{≡}}\AgdaSpace{}%
\AgdaPostulate{ι}\AgdaSpace{}%
\AgdaGeneralizable{m}\AgdaSpace{}%
\AgdaOperator{\AgdaPostulate{+}}\AgdaSpace{}%
\AgdaPostulate{ι}\AgdaSpace{}%
\AgdaGeneralizable{n}\<%
\\
\>[2]\AgdaPostulate{ι-*}\AgdaSpace{}%
\AgdaSymbol{:}\AgdaSpace{}%
\AgdaPostulate{ι}\AgdaSpace{}%
\AgdaSymbol{(}\AgdaGeneralizable{m}\AgdaSpace{}%
\AgdaOperator{\AgdaFunction{*ⁿ}}\AgdaSpace{}%
\AgdaGeneralizable{n}\AgdaSymbol{)}\AgdaSpace{}%
\AgdaOperator{\AgdaDatatype{≡}}\AgdaSpace{}%
\AgdaPostulate{ι}\AgdaSpace{}%
\AgdaGeneralizable{m}\AgdaSpace{}%
\AgdaOperator{\AgdaPostulate{*}}\AgdaSpace{}%
\AgdaPostulate{ι}\AgdaSpace{}%
\AgdaGeneralizable{n}\<%
\end{code}
\end{AgdaSuppressSpace}
}
\codeblock{\begin{code}%
\>[2]\AgdaPostulate{+-assoc}\AgdaSpace{}%
\AgdaSymbol{:}\AgdaSpace{}%
\AgdaFunction{Associative}\AgdaSpace{}%
\AgdaOperator{\AgdaPostulate{\AgdaUnderscore{}+\AgdaUnderscore{}}}\<%
\\
\>[2]\AgdaPostulate{+-comm}\AgdaSpace{}%
\AgdaSymbol{:}\AgdaSpace{}%
\AgdaFunction{Commutative}\AgdaSpace{}%
\AgdaOperator{\AgdaPostulate{\AgdaUnderscore{}+\AgdaUnderscore{}}}\<%
\\
\>[2]\AgdaPostulate{+-idl}\AgdaSpace{}%
\AgdaSymbol{:}\AgdaSpace{}%
\AgdaFunction{LeftIdentity}\AgdaSpace{}%
\AgdaFunction{𝟘}\AgdaSpace{}%
\AgdaOperator{\AgdaPostulate{\AgdaUnderscore{}+\AgdaUnderscore{}}}\<%
\\
\>[2]\AgdaPostulate{+-invl}\AgdaSpace{}%
\AgdaSymbol{:}\AgdaSpace{}%
\AgdaFunction{LeftInverse}\AgdaSpace{}%
\AgdaFunction{𝟘}\AgdaSpace{}%
\AgdaOperator{\AgdaPostulate{-\AgdaUnderscore{}}}\AgdaSpace{}%
\AgdaOperator{\AgdaPostulate{\AgdaUnderscore{}+\AgdaUnderscore{}}}\<%
\\
\>[2]\AgdaPostulate{*-assoc}\AgdaSpace{}%
\AgdaSymbol{:}\AgdaSpace{}%
\AgdaFunction{Associative}\AgdaSpace{}%
\AgdaOperator{\AgdaPostulate{\AgdaUnderscore{}*\AgdaUnderscore{}}}\<%
\\
\>[2]\AgdaPostulate{*-comm}\AgdaSpace{}%
\AgdaSymbol{:}\AgdaSpace{}%
\AgdaFunction{Commutative}\AgdaSpace{}%
\AgdaOperator{\AgdaPostulate{\AgdaUnderscore{}*\AgdaUnderscore{}}}\<%
\\
\>[2]\AgdaPostulate{*-idl}\AgdaSpace{}%
\AgdaSymbol{:}\AgdaSpace{}%
\AgdaFunction{LeftIdentity}\AgdaSpace{}%
\AgdaFunction{𝟙}\AgdaSpace{}%
\AgdaOperator{\AgdaPostulate{\AgdaUnderscore{}*\AgdaUnderscore{}}}\<%
\\
\>[2]\AgdaPostulate{*-+-distl}\AgdaSpace{}%
\AgdaSymbol{:}\AgdaSpace{}%
\AgdaOperator{\AgdaPostulate{\AgdaUnderscore{}*\AgdaUnderscore{}}}\AgdaSpace{}%
\AgdaOperator{\AgdaFunction{DistributesOverˡ}}\AgdaSpace{}%
\AgdaOperator{\AgdaPostulate{\AgdaUnderscore{}+\AgdaUnderscore{}}}\<%
\end{code}
}
\end{mathpar}
The trigonometric equalities that follow are sufficient to derive the
usual identities for complex exponentiation:
\begin{mathpar}
\codeblock{\begin{code}%
\>[2]\AgdaPostulate{sin[-x]}%
\>[11]\AgdaSymbol{:}\AgdaSpace{}%
\AgdaPostulate{sin}\AgdaSpace{}%
\AgdaSymbol{(}\AgdaOperator{\AgdaPostulate{-}}\AgdaSpace{}%
\AgdaGeneralizable{x}\AgdaSymbol{)}\AgdaSpace{}%
\AgdaOperator{\AgdaDatatype{≡}}\AgdaSpace{}%
\AgdaOperator{\AgdaPostulate{-}}\AgdaSpace{}%
\AgdaPostulate{sin}\AgdaSpace{}%
\AgdaGeneralizable{x}\<%
\\
\>[2]\AgdaPostulate{cos[-x]}%
\>[11]\AgdaSymbol{:}\AgdaSpace{}%
\AgdaPostulate{cos}\AgdaSpace{}%
\AgdaSymbol{(}\AgdaOperator{\AgdaPostulate{-}}\AgdaSpace{}%
\AgdaGeneralizable{x}\AgdaSymbol{)}\AgdaSpace{}%
\AgdaOperator{\AgdaDatatype{≡}}%
\>[27]\AgdaPostulate{cos}\AgdaSpace{}%
\AgdaGeneralizable{x}\<%
\\
\>[2]\AgdaPostulate{sin[2πn]}\AgdaSpace{}%
\AgdaSymbol{:}\AgdaSpace{}%
\AgdaPostulate{sin}\AgdaSpace{}%
\AgdaSymbol{(}\AgdaPostulate{ι}\AgdaSpace{}%
\AgdaNumber{2}\AgdaSpace{}%
\AgdaOperator{\AgdaPostulate{*}}\AgdaSpace{}%
\AgdaPostulate{π}\AgdaSpace{}%
\AgdaOperator{\AgdaPostulate{*}}\AgdaSpace{}%
\AgdaPostulate{ι}\AgdaSpace{}%
\AgdaGeneralizable{n}\AgdaSymbol{)}\AgdaSpace{}%
\AgdaOperator{\AgdaDatatype{≡}}\AgdaSpace{}%
\AgdaFunction{𝟘}\<%
\end{code}
}
\codeblock{\begin{code}%
\>[2]\AgdaPostulate{cos[2πn]}\AgdaSpace{}%
\AgdaSymbol{:}\AgdaSpace{}%
\AgdaPostulate{cos}\AgdaSpace{}%
\AgdaSymbol{(}\AgdaPostulate{ι}\AgdaSpace{}%
\AgdaNumber{2}\AgdaSpace{}%
\AgdaOperator{\AgdaPostulate{*}}\AgdaSpace{}%
\AgdaPostulate{π}\AgdaSpace{}%
\AgdaOperator{\AgdaPostulate{*}}\AgdaSpace{}%
\AgdaPostulate{ι}\AgdaSpace{}%
\AgdaGeneralizable{n}\AgdaSymbol{)}\AgdaSpace{}%
\AgdaOperator{\AgdaDatatype{≡}}\AgdaSpace{}%
\AgdaFunction{𝟙}\<%
\\
\>[2]\AgdaPostulate{sin[x+y]}\AgdaSpace{}%
\AgdaSymbol{:}\AgdaSpace{}%
\AgdaPostulate{sin}\AgdaSpace{}%
\AgdaSymbol{(}\AgdaGeneralizable{x}\AgdaSpace{}%
\AgdaOperator{\AgdaPostulate{+}}\AgdaSpace{}%
\AgdaGeneralizable{y}\AgdaSymbol{)}\AgdaSpace{}%
\AgdaOperator{\AgdaDatatype{≡}}\AgdaSpace{}%
\AgdaPostulate{sin}\AgdaSpace{}%
\AgdaGeneralizable{x}\AgdaSpace{}%
\AgdaOperator{\AgdaPostulate{*}}\AgdaSpace{}%
\AgdaPostulate{cos}\AgdaSpace{}%
\AgdaGeneralizable{y}\AgdaSpace{}%
\AgdaOperator{\AgdaPostulate{+}}\AgdaSpace{}%
\AgdaPostulate{cos}\AgdaSpace{}%
\AgdaGeneralizable{x}\AgdaSpace{}%
\AgdaOperator{\AgdaPostulate{*}}\AgdaSpace{}%
\AgdaPostulate{sin}\AgdaSpace{}%
\AgdaGeneralizable{y}\<%
\\
\>[2]\AgdaPostulate{cos[x+y]}\AgdaSpace{}%
\AgdaSymbol{:}\AgdaSpace{}%
\AgdaPostulate{cos}\AgdaSpace{}%
\AgdaSymbol{(}\AgdaGeneralizable{x}\AgdaSpace{}%
\AgdaOperator{\AgdaPostulate{+}}\AgdaSpace{}%
\AgdaGeneralizable{y}\AgdaSymbol{)}\AgdaSpace{}%
\AgdaOperator{\AgdaDatatype{≡}}\AgdaSpace{}%
\AgdaPostulate{cos}\AgdaSpace{}%
\AgdaGeneralizable{x}\AgdaSpace{}%
\AgdaOperator{\AgdaPostulate{*}}\AgdaSpace{}%
\AgdaPostulate{cos}\AgdaSpace{}%
\AgdaGeneralizable{y}\AgdaSpace{}%
\AgdaOperator{\AgdaFunction{-}}\AgdaSpace{}%
\AgdaPostulate{sin}\AgdaSpace{}%
\AgdaGeneralizable{x}\AgdaSpace{}%
\AgdaOperator{\AgdaPostulate{*}}\AgdaSpace{}%
\AgdaPostulate{sin}\AgdaSpace{}%
\AgdaGeneralizable{y}\<%
\end{code}
}
\end{mathpar}
Having prepared the real and trigonometric infrastructure, we implement complex
numbers as ordered pairs of reals and define the usual arithmetic operations.
Complex exponentiation is defined via the Euler formula, and roots of unity
are obtained by applying this function to the appropriate angle.  Note that
our ``division'' by $n$ aligns with the desired definition of
\AF{-ω} which expects the first argument to be non-zero.
\begin{mathpar}
\codeblock{\begin{code}%
\>[0]\AgdaKeyword{data}\AgdaSpace{}%
\AgdaDatatype{ℂ}\AgdaSpace{}%
\AgdaSymbol{:}\AgdaSpace{}%
\AgdaPrimitive{Set}\AgdaSpace{}%
\AgdaKeyword{where}\<%
\\
\>[0][@{}l@{\AgdaIndent{0}}]%
\>[2]\AgdaOperator{\AgdaInductiveConstructor{\AgdaUnderscore{}+\AgdaUnderscore{}i}}\AgdaSpace{}%
\AgdaSymbol{:}\AgdaSpace{}%
\AgdaSymbol{(}\AgdaBound{re}\AgdaSpace{}%
\AgdaBound{im}\AgdaSpace{}%
\AgdaSymbol{:}\AgdaSpace{}%
\AgdaPostulate{ℝ}\AgdaSymbol{)}\AgdaSpace{}%
\AgdaSymbol{→}\AgdaSpace{}%
\AgdaDatatype{ℂ}\<%
\end{code}
}
\and
\codeblock{\begin{code}%
\>[0]\AgdaFunction{ιᶜ}\AgdaSpace{}%
\AgdaSymbol{:}\AgdaSpace{}%
\AgdaDatatype{ℕ}\AgdaSpace{}%
\AgdaSymbol{→}\AgdaSpace{}%
\AgdaDatatype{ℂ}\<%
\\
\>[0]\AgdaFunction{ιᶜ}\AgdaSpace{}%
\AgdaBound{n}\AgdaSpace{}%
\AgdaSymbol{=}\AgdaSpace{}%
\AgdaPostulate{ι}\AgdaSpace{}%
\AgdaBound{n}\AgdaSpace{}%
\AgdaOperator{\AgdaInductiveConstructor{+}}\AgdaSpace{}%
\AgdaFunction{𝟘}\AgdaSpace{}%
\AgdaOperator{\AgdaInductiveConstructor{i}}\<%
\end{code}
}
\and
\codeblock{\begin{code}%
\>[0]\AgdaFunction{𝟘ᶜ}\AgdaSpace{}%
\AgdaSymbol{=}\AgdaSpace{}%
\AgdaFunction{ιᶜ}\AgdaSpace{}%
\AgdaNumber{0}\<%
\\
\>[0]\AgdaFunction{𝟙ᶜ}\AgdaSpace{}%
\AgdaSymbol{=}\AgdaSpace{}%
\AgdaFunction{ιᶜ}\AgdaSpace{}%
\AgdaNumber{1}\<%
\end{code}
}
\and
\codeblock{\begin{code}%
\>[0]\AgdaOperator{\AgdaFunction{\AgdaUnderscore{}+ᶜ\AgdaUnderscore{}}}\AgdaSpace{}%
\AgdaSymbol{:}\AgdaSpace{}%
\AgdaDatatype{ℂ}\AgdaSpace{}%
\AgdaSymbol{→}\AgdaSpace{}%
\AgdaDatatype{ℂ}\AgdaSpace{}%
\AgdaSymbol{→}\AgdaSpace{}%
\AgdaDatatype{ℂ}\<%
\\
\>[0]\AgdaSymbol{(}\AgdaBound{a}\AgdaSpace{}%
\AgdaOperator{\AgdaInductiveConstructor{+}}\AgdaSpace{}%
\AgdaBound{b}\AgdaSpace{}%
\AgdaOperator{\AgdaInductiveConstructor{i}}\AgdaSymbol{)}\AgdaSpace{}%
\AgdaOperator{\AgdaFunction{+ᶜ}}\AgdaSpace{}%
\AgdaSymbol{(}\AgdaBound{c}\AgdaSpace{}%
\AgdaOperator{\AgdaInductiveConstructor{+}}\AgdaSpace{}%
\AgdaBound{d}\AgdaSpace{}%
\AgdaOperator{\AgdaInductiveConstructor{i}}\AgdaSymbol{)}\AgdaSpace{}%
\AgdaSymbol{=}\AgdaSpace{}%
\AgdaSymbol{(}\AgdaBound{a}\AgdaSpace{}%
\AgdaOperator{\AgdaPostulate{+}}\AgdaSpace{}%
\AgdaBound{c}\AgdaSymbol{)}\AgdaSpace{}%
\AgdaOperator{\AgdaInductiveConstructor{+}}\AgdaSpace{}%
\AgdaSymbol{(}\AgdaBound{b}\AgdaSpace{}%
\AgdaOperator{\AgdaPostulate{+}}\AgdaSpace{}%
\AgdaBound{d}\AgdaSymbol{)}\AgdaSpace{}%
\AgdaOperator{\AgdaInductiveConstructor{i}}\<%
\end{code}
}
\and
\codeblock{\begin{code}%
\>[0]\AgdaOperator{\AgdaFunction{-ᶜ\AgdaUnderscore{}}}\AgdaSpace{}%
\AgdaSymbol{:}\AgdaSpace{}%
\AgdaDatatype{ℂ}\AgdaSpace{}%
\AgdaSymbol{→}\AgdaSpace{}%
\AgdaDatatype{ℂ}\<%
\\
\>[0]\AgdaOperator{\AgdaFunction{-ᶜ}}\AgdaSpace{}%
\AgdaSymbol{(}\AgdaBound{x}\AgdaSpace{}%
\AgdaOperator{\AgdaInductiveConstructor{+}}\AgdaSpace{}%
\AgdaBound{y}\AgdaSpace{}%
\AgdaOperator{\AgdaInductiveConstructor{i}}\AgdaSymbol{)}\AgdaSpace{}%
\AgdaSymbol{=}\AgdaSpace{}%
\AgdaSymbol{(}\AgdaOperator{\AgdaPostulate{-}}\AgdaSpace{}%
\AgdaBound{x}\AgdaSymbol{)}\AgdaSpace{}%
\AgdaOperator{\AgdaInductiveConstructor{+}}\AgdaSpace{}%
\AgdaSymbol{(}\AgdaOperator{\AgdaPostulate{-}}\AgdaSpace{}%
\AgdaBound{y}\AgdaSymbol{)}\AgdaSpace{}%
\AgdaOperator{\AgdaInductiveConstructor{i}}\<%
\end{code}
}
\and
\codeblock{\begin{code}%
\>[0]\AgdaOperator{\AgdaFunction{\AgdaUnderscore{}*ᶜ\AgdaUnderscore{}}}\AgdaSpace{}%
\AgdaSymbol{:}\AgdaSpace{}%
\AgdaDatatype{ℂ}\AgdaSpace{}%
\AgdaSymbol{→}\AgdaSpace{}%
\AgdaDatatype{ℂ}\AgdaSpace{}%
\AgdaSymbol{→}\AgdaSpace{}%
\AgdaDatatype{ℂ}\<%
\\
\>[0]\AgdaSymbol{(}\AgdaBound{a}\AgdaSpace{}%
\AgdaOperator{\AgdaInductiveConstructor{+}}\AgdaSpace{}%
\AgdaBound{b}\AgdaSpace{}%
\AgdaOperator{\AgdaInductiveConstructor{i}}\AgdaSymbol{)}\AgdaSpace{}%
\AgdaOperator{\AgdaFunction{*ᶜ}}\AgdaSpace{}%
\AgdaSymbol{(}\AgdaBound{c}\AgdaSpace{}%
\AgdaOperator{\AgdaInductiveConstructor{+}}\AgdaSpace{}%
\AgdaBound{d}\AgdaSpace{}%
\AgdaOperator{\AgdaInductiveConstructor{i}}\AgdaSymbol{)}\AgdaSpace{}%
\AgdaSymbol{=}\AgdaSpace{}%
\AgdaSymbol{(}\AgdaBound{a}\AgdaSpace{}%
\AgdaOperator{\AgdaPostulate{*}}\AgdaSpace{}%
\AgdaBound{c}\AgdaSpace{}%
\AgdaOperator{\AgdaFunction{-}}\AgdaSpace{}%
\AgdaBound{b}\AgdaSpace{}%
\AgdaOperator{\AgdaPostulate{*}}\AgdaSpace{}%
\AgdaBound{d}\AgdaSymbol{)}\AgdaSpace{}%
\AgdaOperator{\AgdaInductiveConstructor{+}}\AgdaSpace{}%
\AgdaSymbol{(}\AgdaBound{a}\AgdaSpace{}%
\AgdaOperator{\AgdaPostulate{*}}\AgdaSpace{}%
\AgdaBound{d}\AgdaSpace{}%
\AgdaOperator{\AgdaPostulate{+}}\AgdaSpace{}%
\AgdaBound{b}\AgdaSpace{}%
\AgdaOperator{\AgdaPostulate{*}}\AgdaSpace{}%
\AgdaBound{c}\AgdaSymbol{)}\AgdaSpace{}%
\AgdaOperator{\AgdaInductiveConstructor{i}}\<%
\end{code}
}
\and
\codeblock{\begin{code}%
\>[0]\AgdaOperator{\AgdaFunction{e\textasciicircum{}i\AgdaUnderscore{}}}\AgdaSpace{}%
\AgdaSymbol{:}\AgdaSpace{}%
\AgdaPostulate{ℝ}\AgdaSpace{}%
\AgdaSymbol{→}\AgdaSpace{}%
\AgdaDatatype{ℂ}\<%
\\
\>[0]\AgdaOperator{\AgdaFunction{e\textasciicircum{}i\AgdaUnderscore{}}}\AgdaSpace{}%
\AgdaBound{x}\AgdaSpace{}%
\AgdaSymbol{=}\AgdaSpace{}%
\AgdaSymbol{(}\AgdaPostulate{cos}\AgdaSpace{}%
\AgdaBound{x}\AgdaSymbol{)}\AgdaSpace{}%
\AgdaOperator{\AgdaInductiveConstructor{+}}\AgdaSpace{}%
\AgdaSymbol{(}\AgdaPostulate{sin}\AgdaSpace{}%
\AgdaBound{x}\AgdaSymbol{)}\AgdaSpace{}%
\AgdaOperator{\AgdaInductiveConstructor{i}}\<%
\end{code}
}
\and
\codeblock{\begin{code}%
\>[0]\AgdaFunction{-ωᶜ}\AgdaSpace{}%
\AgdaSymbol{:}\AgdaSpace{}%
\AgdaDatatype{ℕ}\AgdaSpace{}%
\AgdaSymbol{→}\AgdaSpace{}%
\AgdaDatatype{ℕ}\AgdaSpace{}%
\AgdaSymbol{→}\AgdaSpace{}%
\AgdaDatatype{ℂ}\<%
\\
\>[0]\AgdaFunction{-ωᶜ}\AgdaSpace{}%
\AgdaBound{n}\AgdaSpace{}%
\AgdaBound{k}\AgdaSpace{}%
\AgdaSymbol{=}\AgdaSpace{}%
\AgdaOperator{\AgdaFunction{e\textasciicircum{}i}}\AgdaSpace{}%
\AgdaSymbol{(}\AgdaOperator{\AgdaPostulate{-}}\AgdaSpace{}%
\AgdaPostulate{ι}\AgdaSpace{}%
\AgdaNumber{2}\AgdaSpace{}%
\AgdaOperator{\AgdaPostulate{*}}\AgdaSpace{}%
\AgdaPostulate{π}\AgdaSpace{}%
\AgdaOperator{\AgdaPostulate{*}}\AgdaSpace{}%
\AgdaPostulate{ι}\AgdaSpace{}%
\AgdaBound{k}\AgdaSpace{}%
\AgdaOperator{\AgdaFunction{/}}\AgdaSpace{}%
\AgdaBound{n}\AgdaSymbol{)}\<%
\end{code}
}
\end{mathpar}
The aim now is to demonstrate that the twelve theorems---eight concerning ring
properties and four concerning roots‑of‑unity properties---introduced at the
beginning of this section hold for \AF{ℂ} under the operations defined above. We
translate the preceding definitions, together with the properties of the real
and natural numbers and apply \vampire to each goal. Detailed
specifications are provided in the supplementary material located in the
subdirectory \texttt{complex}. Here we simply note that all twelve theorems were proved
automatically in a fraction of a second---an outcome that is both gratifying and
somewhat surprising, given that some of the generated proofs comprise roughly
three hundred lines of code. We regard this result as compelling evidence that
the proposed approach is practical.

\section{Discussion and Future Work}
\label{sec:discussion}
While we think our prototype shows considerable promise, it is not yet a full hammer-style system for Agda.
It is plainly missing a premise selector and some automation to execute a complete round-trip.
Agda's reflection API already
exposes facilities such as \texttt{execTC} for running external commands and
\texttt{checkFromStringTC} for parsing results, so full automation of these
steps is feasible within the existing framework.
Additionally, the following improvements would improve the technology further to obtain a robust, high-quality hammer.

\paragraph*{Full Clausal Logic}
The current prototype insists on Horn clauses, both in the input and in proofs.
However, full clausal logic seems achievable with a doubly-negated representation~\cite{burel}, and \cref{sec:friedman} corroborates this hypothesis.
This would not only allow richer types to be exported from Agda, but also permit more \vampire proofs.
In particular, \vampire already has support for various kinds of inductive reasoning~\cite{vampire-induction}, but the proofs cannot currently be recovered as they leave the Horn fragment, even on Horn inputs.
Note that we cannot throw the baby out with the bathwater: \vampire's frontend will happily perform classical operations like simplifying $\lnot \lnot P$ to $P$, so a clausal input form remains necessary.

\paragraph*{A Larger Slice of Agda}
However, this does not preclude performing a more elaborate translation
from rich Agda types to \vampire clauses.
We intend to broaden the fragment of Agda that can be translated and
reconstructed. In particular, supporting parameterised types and indexed
families is important. It is common practice to translate inductive families
into parameterised types together with propositional equalities for certain
constructors; work on indexed containers~\cite{indexed-cont} and
descriptors~\cite{descriptors} makes this idea precise.
We also wish to support term synthesis for proof‑relevant types~\cite{proof-relevant-horn}.
For example, synthesising an inhabitant of type $\mathbb{N}$ can
be reduced to a Vampire goal \texttt{nat}.
This would then be provable using an \emph{axiom} \texttt{zero}.
CoqHammer~\cite{coqhammer} and Lean‑auto~\cite{leanauto} demonstrate that
more elaborate translations for more of Agda are feasible, but the details need to be reconciled
with the approach taken here.
\todo{Fit Harrop in here somehow?}

\paragraph*{Readable Proofs}
The generated proofs mirror \vampire's proofs, producing one Agda definition for each step of a \vampire proof.
This is ideal for inspecting the result and diagnosing problems, but often very verbose.
We already apply some simplification during generation, but further work on a ``proof prettifier'' would substantially improve usability.
Normalising proofs works, but produces very large terms ---
as expected from what is effectively cut-elimination.
Our aim is a sensible
middle ground that produces proofs resembling those a human would write.

\section{Related Work}
Combining automated and interactive theorem proving is a relatively young idea.
Ahrendt et al~\cite{ahrendt} are prescient in their assessment of the promise and challenges involved.
It is encouraging that of six challenges they identify in their Section 2, two are completely solved,
significant progress has been made on large theories and inductive reasoning,
and only mismatched logic and lack of specialisation to the target domain remain unresolved.
Various approaches were tried around this time, including Tammet and Smith's experiments with Gandalf, ALF, and the Horn fragment~\cite{alf-hammer}, closely related to our own.

Today, Isabelle's \emph{sledgehammer}~\cite{sledgehammer} is the archetypal hammer:
the system selects premises from the prover library, exports goals and premises to a variety of ATPs~\cite{hammer-provers},
retrieves the required premises from the ATP proofs, and finally uses internal automation to recover a certified proof.
Recent work~\cite{smtisabelle} recovers very fine-grained proofs from SMT solvers, and in a curious turn of events it was also shown that external ATPs are not crucial for Sledgehammer~\cite{sledgehammering-without-atps}.
In a similar vein, we could also mention HOL(y)Hammer~\cite{holyhammer}, \textsf{MizAR}~\cite{mizar}, and the recent first hammer for Metamath~\cite{metamath-hammer}.

Moving to the dependently-typed world,
CoqHammer~\cite{coqhammer} adapts the hammer paradigm to Rocq's calculus of inductive
constructions: it translates Rocq goals to first‑order ATPs, employs heuristic
and machine‑learned premise selection, and uses internal reconstruction
procedures (often involving proof search) to obtain constructive
proofs. SMTCoq~\cite{smtcoq} adopts a more conservative strategy by importing
and checking SMT/SAT proofs inside Rocq, thereby providing strong trust guarantees
at the cost of greater restriction on the back ends. Both
projects exemplify the substantial reconciliation machinery required when
external tools assume classical reasoning but the host system is constructive.
Work in Lean~\cite{leanhammer,leanauto} shows active efforts to bring
ATP automation to contemporary proof assistants: these projects combine
translation, premise selection and partial reconstruction, and investigate the
trade‑off between lightweight translation layers and deeper kernel integration.

Agda‑specific work is scarce. The
closest integrates Waldmeister~\cite{struth-agda-atp}, identifying (as we do)
a representable sub‑language and translating problems to Waldmeister
and proofs back. Unlike our reflection‑based
approach, that project encodes the sub‑language explicitly within Agda (giving
it semantics and defining syntactic translations), which provides a precise
specification of the fragment but imposes a substantial manual encoding burden
on users and maintainers.
Agda's built‑in automatic search (historically \emph{Agsy})
attempts to synthesise inhabitants for a goal type and is effective
for simple cases, but it rarely suffices where non‑trivial composition of
premises or sophisticated search is required.

\section{Conclusions}
%

We have demonstrated a lightweight integration of Agda with the automated
theorem prover \vampire that yields substantial proof‑automation for tedious
goals while preserving the trusted‑core guarantees of an interactive theorem
prover. By harnessing Agda's reflection facilities to emit \vampire problems and
reconstructing resulting proofs in Agda through a compact Prolog‑based
engine, the workflow mirrors the ``hammer'' paradigm without demanding invasive
modifications to either system. Crucially, the approach is not intrinsically
tied to Agda: any dependently‑typed proof assistant that offers reflective extraction
(\eg{} Lean or Idris) could replace the Agda front-end whilst reusing very similar
\vampire translation and proof‑conversion machinery. Likewise, the method is not
\vampire-specific; with relatively little implementation effort alternative ATPs
can be used, allowing the framework to benefit from future advances in automated reasoning.

Our prototype tackles a challenging ITP/ATP pairing: Agda
provides virtually no built‑in tactics, and Vampire represents a maximalist ATP
with a rich inference repertoire. Yet the system discharges non‑trivial
lemmas---such as those concerning roots-of-unity properties---immediately,
thereby markedly reducing the overall development effort. These
findings indicate that a carefully-chosen fragment of dependent type theory can
serve as a robust bridge between interactive and automated provers, and invite
further exploration of richer fragments, tighter ATP integration, and broader
applicability across dependently-typed languages.

\paragraph*{Statement on Large Language Models}
None of the results or the content here are produced by LLM. However, we did employ tools to spell‑ and style‑check human‑written text, with careful inspection of the outcomes.

\bibliography{ITP26}
\end{document}